\newcommand{\be}{\begin{equation}}
\newcommand{\ee}{\end{equation}}
\newcommand{\ba}{\begin{eqnarray}}
\newcommand{\ea}{\end{eqnarray}}
\newcommand{\tr}{\operatorname{tr}}
\newcommand{\proj}[1]{\ket{#1}\!\!\bra{#1}}
\newcommand{\one}{{\bf{1}}}
\newcommand{\vecP}{\vec{P}}
\newcommand{\N}{\mathcal{N}}
\newcommand{\Q}{\mathcal{Q}}
\newtheorem{cor}{Corollary}
\newtheorem{observation}{Observation}
\newtheorem{proposition}{Proposition}
\renewcommand{\L}{\mathcal{L}}
\begin{document}

\title{Maximal Violation of a Broad Class of Bell Inequalities and Its Implications on Self-Testing}

\author{C. Jebarathinam}
\affiliation{Department of Physics and Center for Quantum Frontiers of Research \& Technology (QFort), National Cheng Kung University, Tainan 701, Taiwan}
\author{Jui-Chen Hung}
\affiliation{Department of Physics, National Cheng Kung University, Tainan 701, Taiwan}
\author{Shin-Liang Chen}
\affiliation{Department of Physics and Center for Quantum Frontiers of Research \& Technology (QFort), National Cheng Kung University, Tainan 701, Taiwan}
\affiliation{Dahlem Center for Complex Quantum Systems, Freie Universit\"at Berlin, 14195 Berlin, Germany}
\author{Yeong-Cherng Liang}
\email{ycliang@mail.ncku.edu.tw}
\affiliation{Department of Physics and Center for Quantum Frontiers of Research \& Technology (QFort), National Cheng Kung University, Tainan 701, Taiwan}
\date{\today}

\begin{abstract}
In quantum information, lifting is a systematic procedure that can be used to derive---when provided with a seed Bell inequality---other Bell inequalities applicable in more complicated Bell scenarios. It is known that the procedure of lifting introduced by Pironio [J. Math. Phys. A 46, 062112 (2005)] preserves the facet-defining property of a Bell inequality. Lifted Bell inequalities therefore represent a broad class of Bell inequalities that can be found in {\em all} Bell scenarios. Here, we show that the maximal value of {\em any} lifted Bell inequality is preserved for both the set of nonsignaling correlations and quantum correlations. Despite the degeneracy in the maximizers of such inequalities, we show that the ability to self-test a quantum state is preserved under these lifting operations. In addition, except for outcome-lifting, local measurements that are self-testable using the original Bell inequality---despite the degeneracy---can also be self-tested using {\em any} lifted Bell inequality derived therefrom. While it is not  possible to self-test {\em all} the positive-operator-valued measure elements using an outcome-lifted Bell inequality, we show that partial, but robust self-testing statements on the underlying measurements can nonetheless be made from the quantum violation of these lifted inequalities. We also highlight the implication of our observations on the usefulness of using lifted Bell-like inequalities as a device-independent witnesses for entanglement depth. The impact of the aforementioned degeneracy on the geometry of the quantum set of correlations is briefly discussed.
\end{abstract}

\maketitle

\section{Introduction}
Inspired by the thought-provoking paper of Einstein, Podolsky, and Rosen~\cite{EPR35}, Bell derived~\cite{Bell64}---based on well-accepted classical intuitions---an inequality constraining the correlations between local measurement outcomes on two distant systems.  He further showed that the so-called Bell inequality can be violated by quantum theory using local, but incompatible measurements on entangled states. Since then, various Bell inequalities, such as the one due to Clauser, Horne, Shimony and Holt (CHSH)~\cite{Clauser69} have been derived to investigate the intriguing nature of quantum theory and also the information-processing power enabled by these nonclassical, Bell-nonlocal~\cite{Brunner14} correlations.

For example, Ekert~\cite{Eke91} showed in 1991 that the quantum violation of Bell inequalities offers an unprecedented level of security for quantum key distribution protocols.  Independently, Mayers and Yao~\cite{Mayers:1998aa, MY04} showed that certain extremal quantum correlation enables the possibility to {\em self-test} quantum devices. These discoveries prompted the paradigm of device-independent quantum information~\cite{Scarani12,Brunner14} in which the physical properties of quantum devices are certified without making any assumption on the internal working of the devices, but rather through the observation of a Bell-nonlocal correlation.

Interestingly, an observation of the {\em maximal}  quantum violation of certain Bell inequalities, such as the CHSH inequality, is already sufficient to self-test the underlying quantum state and the measurements that give rise to the observed violation~\cite{PR92}. To date, numerous Bell inequalities have been derived (see, e.g.,~\cite{Braunstein1990,WW01,Collins2002,Kaszlikowski2002,Zukowski2002,Sliwa2003,Collins04,Buhrman2005,Guhne2005,Avis2006,Liang2009,Gisin:2009aa,Acin2012,Bancal12,Grandjean2012,Brunner14,Mancinska:2014aa,Liang:PRL:2015,Schwarz16,SAT+17,Enky1810.05636,Cope18,BAS19,Zambrini19} and references therein). However,  beyond the CHSH Bell inequality, only a handful of them~\cite{Miller2013,YN13,Mancinska:2014aa,Bamps2015,SAS+16,Jed16,Andersson2017,SAT+17,KST18,BAS19} have been identified as relevant for the purpose of self-testing (see~\cite{SB19} for a recent review on the topic of self-testing). Is it possible to make some general statements regarding the self-testing property of Bell inequalities defined for an {\em arbitrary} Bell scenario?

To answer the above question, we consider, in this work, Bell inequalities that may be obtained from the procedure of Pironio's {\em lifting}~\cite{PIR05}. Importantly, such inequalities exist in {\em all} Bell scenarios beyond the simplest one for two parties, two settings and two outcomes. If a Bell inequality is facet-defining~\cite{Collins04}, the same holds for its liftings~\cite{PIR05}. What about their quantum violation? In \cite{Rosset2014} (see also~\cite{Curchod2015}), it was shown that in addition to the bound satisfied by Bell-local correlations, the maximal quantum (nonsignaling~\cite{BLM+05}) value of Bell inequalities is preserved for party-lifting. In this work, we give an alternative proof of this fact and show, in addition, that the maximal quantum (non-signaling) value of a Bell inequality is also preserved for {\em other} types of Pironio's lifting.

As a corollary of our results, we further show that the self-testing properties of a Bell inequality is largely preserved through the procedure of lifting. In other words, if a Bell inequality can be used to self-test some quantum state $\ket{\psi}$, so can its liftings. Moreover, except for the case of outcome-lifting, the possibility to self-test the underlying measurement operators using a Bell inequality remains intact upon the application of Pironio's lifting. As we illustrate in this work, the maximizers of lifted Bell inequalities are not unique. There is thus no hope (see, e.g.,~\cite{GKW+18}) of providing a {\em complete} self-testing of the employed quantum device using a lifted Bell inequality. Nonetheless, we provide numerical evidence suggesting that lifted Bell inequalities provide the same level of robustness for self-testing the relevant parts of the devices.

The rest of this paper is organized as follows. In Sec. \ref{prl}, we introduce basic notions of a Bell scenario and remind the definitions of self-testing.
After that, we investigate and compare the maximal violation of lifted Bell inequalities against that of the original Bell inequalities, assuming quantum, or general nonsignaling correlations \cite{BLM+05}. In the same section, we also discuss the self-testing property of lifted Bell inequalities, and the usefulness of party-lifted Bell-like inequalities as device-independent witnesses for entanglement depth~\cite{Liang:PRL:2015}. In Sec.~\ref{con}, we present some concluding remarks and possibilities for future research. Examples illustrating the non-uniqueness of the maximizers of lifted Bell inequalities, as well as their implications on the geometry of the quantum set of correlations are provided in the Appendices.

\section{Preliminaries}\label{prl}
\subsection{Bell scenario}

Consider a Bell scenario involving $n$ spatially separated parties labeled by $i \in \{1,2,\cdots,n\}$ and
let the $i$-th party performs a measurement labeled by $j_i$,with outcomes labeled by $k_i$.
In any given Bell scenario, we may appreciate the strength of correlation between measurement outcomes  via a collection of joint conditional probabilities.
Following the literature (see, e.g.,~\cite{Brunner14}), we represent these conditional probabilities---dubbed a {\em correlation}---of getting the outcome combination $\vec{k}=(k_1k_2\cdots k_n)$ conditioned on performing the measurements $\vec{j}=(j_1j_2\cdots j_n)$ by the vector 
$\vecP:=\{P(\vec{k}|\vec{j})\}$. In addition to the normalization condition and the positivity constraint $P(\vec{k}|\vec{j})\ge 0$ for all $\vec{k},\vec{j}$,
each correlation is required to satisfy the nonsignaling constraints~\cite{BLM+05} (see also~\cite{PR94})
\begin{align}\label{Eq:NS}
&\sum_{k_i} P(k_1 \cdots k_i \cdots k_n|j_1 \cdots j_i \cdots j_n) \nonumber \\ 
&=\sum_{k_i} P(k_1 \cdots k_i \cdots k_n|j_1 \cdots j'_i \cdots j_n) 
\end{align}
for all $i, j_1, \cdots j_{i-1},j_i,j'_i,j_{i+1} \cdots  j_n$, and $k_\ell$ (with $\ell\neq i$).
In any given Bell scenario, the set of correlations satisfying these constraints forms the so-called nonsignaling 
polytope $\mathcal{N}$ \cite{BLM+05}.

A correlation  is called Bell-local~\cite{Brunner14} if it can be explained by a local-hidden-variable model~\cite{Bell64}, 
\begin{equation*}
P(k_1 \cdots k_n|j_1 \cdots j_n)=\sum_\lambda p_\lambda P(k_1|j_1, \lambda)\cdots P(k_n|j_n, \lambda)
\end{equation*}
for all $k_1 \cdots k_n,j_1 \cdots j_n$, where $\lambda$ is the hidden variable which occurs with probability $p_\lambda$,
$\sum_\lambda p_\lambda=1$ and $P(k_i|j_i, \lambda)$ is the probability of obtaining the measurement outcome 
$k_i$ given the setting $j_i$ and the hidden variable $\lambda$. In a given Bell scenario, the set of Bell-local
correlations forms a polytope  called a Bell polytope, or more frequently a local polytope $\mathcal{L}$, which is a subset of the nonsignaling polytope.

A correlation which cannot be explained by a local-hidden-variable model is said to be Bell-nonlocal 
and must necessarily violate a Bell inequality~\cite{Bell64} --- a constraint satisfied by all $\vecP\in\L$.
A linear Bell inequality has the generic form:
\begin{align}\label{Eq:npartiteBI}
	I_n(\vec{P}):=\vec{B}\cdot \vec{P}=\sum_{k_1\cdots k_n,j_1\cdots j_n}  B_{\vec{k},\vec{j}}P(\vec{k}|\vec{j})
\stackrel{\L}{\le} \beta_{\L},
\end{align}
where $\vec{B}:=\{B_{\vec{k},\vec{j}}\}$ denotes the vector of Bell coefficients, $\vec{B}\cdot \vec{P}$ and $\beta_{\L}$ are, respectively, the Bell expression and the local bound of a Bell inequality.

A correlation $\vecP$ is called quantum if the joint probabilities  can be written as
\be\label{Eq:Born1}
	P(k_1 \cdots k_n|j_1 \cdots j_n)=\tr \left( \otimes^n_{i=1}  M^{(i)}_{k_i|j_i} \rho_{12\cdots n} \right) 
\ee
where $\rho_{12\cdots n}$ is an $n$-partite density matrix and $\{M^{(i)}_{k_i|j_i}\}_{k_i}$ is the positive-operator-valued measure (POVM) describing the $j_i$-th measurement of the $i$-party.  By definition, POVM elements satisfy the constraints of being positive semidefinite, $M^{(i)}_{k_i|j_i}\succeq 0$ for all $k_i$ and $j_i$, as well as the normalization requirement $\sum_{k_i} M^{(i)}_{k_i|j_i}= \one$ for all $j_i$. Thus, a correlation is quantum if and only if the joint probabilities of such a correlation can be realized experimentally by performing local measurements on an $n$-partite quantum system.  The set of quantum correlations $\Q$  forms a convex set satisfying $\mathcal{L} \subset \mathcal{Q}\subset \mathcal{N}$. It is, however, not a polytope \cite{WW01} (see also~\cite{GKW+18}). When necessary, we will use $\Q_n$ to denote the set of quantum correlations arising in an $n$-partite Bell scenario.

\subsection{Self-testing}
\label{Sec:SelfTest}

Certain nonlocal correlations $\vecP\in\Q$ have the appealing feature of being able to reveal (essentially unambiguously) the quantum strategy, i.e., the underlying  state and/or the measurement operators leading to these correlations~\cite{PR92,WW01,MY04,SB19}. Following  \cite{MY04}, we say that such a $\vecP\in\Q$ self-tests the underlying quantum strategy. To this end, it is worth noting that all pure bipartite entangled states can be self-tested~\cite{Coladangelo:2017aa}.

To facilitate subsequent discussions, we remind here the formal definition of self-testing in a bipartite Bell scenario following the approach of \cite{GKW+18} (see also~\cite{SB19}). Specifically, consider two spatially separated parties Alice and Bob who each performs measurements labeled by $x$, $y$ and, respectively, observes the outcomes $a$, $b$. 
We say that a bipartite correlation $\vec{P}:=\{P(ab|xy)\}$ satisfying 
\begin{equation}\label{Eq:Born2}
	P(ab|xy)=\tr\left[ M^{(1)}_{a|x} \otimes M^{(2)}_{b|y}  \rho_{12}\right], 
\end{equation}
for all  $a$, $b$, $x$ and $y$ self-tests the reference (entangled) state $\ket{\tilde{\psi}_{12}}$ if there exists a local isometry $\Phi=\Phi_1 \otimes  \Phi_2$ such that
\be \label{Eq:StateTransformation}
	\Phi\, \rho_{12}\, \Phi^\dag=   \proj{\tilde{\psi}_{12}} \otimes \rho_{aux}
\ee
where $\rho_{12}$ is the measured quantum state [acting on $\mathcal{H}_A \otimes \mathcal{H}_B$],  $\rho_{aux}$ is an auxiliary state acting 
on  $\mathcal{H}_{A'} \otimes \mathcal{H}_{B'}$, and $\mathcal{H}_{A'}$ and $\mathcal{H}_{B'}$ are the Hilbert spaces associated with the other degrees of freedom of Alice and Bob's subsystem respectively~\cite{CKS19}. 

Often, a $\vecP\in\Q$  that self-tests some reference quantum state can also be used to certify the measurements as well. In such cases, 
we say that a bipartite correlation $\vecP$ obtained from Eq.~\eqref{Eq:Born2} self-tests the reference quantum state $\ket{\tilde{\psi}_{12}}$ and the reference POVM
$\{\tilde{M}^{(1)}_{a|x}\}_{a}$, $\{\tilde{M}^{(2)}_{b|y}\}_{b}$ if  there exists a local isometry
$\Phi=\Phi_1  \otimes \Phi_2$ such that Eq.~\eqref{Eq:StateTransformation} holds and 
\begin{align}\label{Eq:POVMTransformation}
	&\Phi [ M^{(1)}_{a|x} \otimes M^{(2)}_{b|y}  \rho_{12} ] \Phi^\dag
	= [ \tilde{M}^{(1)}_{a|x}  \otimes \tilde{M}^{(2)}_{b|y} \proj{\tilde{\psi}_{12}} ] \otimes \rho_{aux}. 
\end{align}
for all  $a$, $b$, $x$ and $y$. By summing over $a,b$, and using the normalization of POVM, one recovers Eq.~\eqref{Eq:StateTransformation} from Eq.~\eqref{Eq:POVMTransformation}.

Interestingly, there are Bell inequalities whose maximal quantum violation alone is sufficient to self-test the 
quantum state (and the measurement operators) \cite{MYS12,YN13,PVN14,SAS+16,Jed16,SAT+17,Natarajan:2017,Coladangelo2018}. Since then, identifying Bell
inequalities which can be used for the task of self-testing has received considerable attention.
To this end, note that if the maximal quantum violation of a Bell inequality self-tests some quantum state as well as the underlying measurements, then this maximal quantum violation must be achieved by a unique correlation \cite{GKW+18}. 

More formally, consider a bipartite Bell inequality, 
\be \label{2partiteBI}
I_2(\vec{P}):=\sum_{a, b,x,y}  B_{a, b,x,y}P(ab|xy) \stackrel{\L}{\le} \beta_\L,
\ee
with a quantum bound (maximal quantum violation):
\begin{equation}
	\beta_\Q=\max_{\vec{P}\in\Q} I_2(\vec{P})>\beta_\L.
\end{equation}
We say that an observation of the quantum violation $I_2(\vec{P})=\beta_\Q$ self-tests the reference (entangled) state $\ket{\tilde{\psi}_{12}}$ and the reference POVM $\{\tilde{M}^{(1)}_{a|x}\}_{a}$, $\{\tilde{M}^{(2)}_{b|y}\}_{b}$ if there exists a local isometry
$\Phi=\Phi_1 \otimes \Phi_2$ such that Eq.~\eqref{Eq:StateTransformation}, Eq.~\eqref{Eq:POVMTransformation}  hold for all $\vec{P}\in\Q$ satisfying Eq.~\eqref{Eq:Born2} and $I_2(\vec{P})=\beta_\Q$.

\section{Maximal violation of lifted Bell inequalities and its implications}
\label{maxlift}

In this section, we show that the maximal quantum (nonsignaling) violation of a lifted Bell inequality must be the same as that of the Bell inequality from which the lifting is applied.  We then discuss the implication of these observations in the context of self-testing, on the geometry of the quantum set of correlations, as well as on the device-independent certification of entanglement depth. For ease of presentation, our discussion will often be carried out assuming a bipartite Bell scenario (for the original Bell inequality). However, it should be obvious from the presentation that our results also hold for any Bell scenario with more parties, and also for a Bell inequality that is not necessarily facet-defining.

\subsection{More inputs}

Let us begin with the simplest kind of lifting, namely, one that allows additional measurement settings. Applying Pironio's input-lifting~\cite{PIR05} to a Bell inequality means to consider the very same Bell inequality in a Bell scenario with more measurement settings for at least one of the parties.  At first glance, it may seem rather unusual to make use of only the data collected for a subset of the input combinations, but in certain cases (see, e.g.,~\cite{Shadbolt2012}), the consideration of all input-lifted facets is already sufficient to identify the non-Bell-local nature of the observed correlations.

Since an input-lifted Bell inequality is exactly the same as the original Bell inequality, its maximal quantum (non-signaling) violation is obviously the same as that of the original Bell inequality. Similarly, it is evident that if the maximal quantum violation of the original Bell inequality self-tests some reference quantum state $\ket{\psi}$ and POVMs $\{M^{(1)}_{a|x}\}_{a,x}$,$\{M^{(2)}_{b|y}\}_{b,y}$,\ldots, so does the maximal quantum violation of the input-lifted Bell inequality. 

However, since no constraint is imposed on the additional inputs that do not appear in the Bell expression, it is clear that even if we impose the constraint that the maximal quantum violation of an input-lifted Bell inequality is attained, these other local POVMs can be completely arbitrary. Thus, the subset of quantum correlation attaining the maximal quantum violation of {\em any} input-lifted Bell inequality is {\em not} unique, and has a degeneracy that increases with the number of these ``free" inputs. In other words, the set of quantum maximizers of any input-lifted Bell inequality define a flat region of the boundary of the quantum set of correlations, cf.~\cite{GKW+18}. In particular, it could lead to completely flat boundaries of $\Q$ on specific two-dimensional slices in the correlation space (see Fig.~\ref{Fig1}). For some explicit examples illustrating the aforementioned non-uniqueness, see Appendix~\ref{Examples}.

\subsection{More outcomes}

Instead of the trivial input-lifting, one may also lift a Bell inequality to a scenario with more measurement outcomes. Specifically, consider  a bipartite Bell scenario where 
the $y'$-th measurement of Bob has $v\ge2$ possible outcomes. 
The simplest outcome-lifting \`a la Pironio~\cite{PIR05} then consists of two steps: (1) choose an outcome, say, $b=b'$ from Bob's $y'$-th measurement, and (2) replaces in the sum of Eq.~\eqref{2partiteBI} all terms of the form $P(ab'|xy')$ by $P(ab'|xy')+P(au|xy')$.

The resulting outcome-lifted Bell inequality reads as:
\begin{align}\label{LO2partiteBI} 
I^\text{\tiny LO}_2=&\!\!\sum_{a, b,x,y\ne y'}  B_{a, b,x,y}P(ab|xy) +\!\!\sum_{a,x,b\ne b', u}  B_{a, b,x,y'}P(ab|xy') \nonumber  \\
&+\sum_{a,x}  B_{a, b',x,y'}\left[ P(ab'|xy') + P(au|xy')\right]\stackrel{\L}{\le} \beta_{L}    
\end{align}
where the local bound $\beta_\L$ is provably the same~\cite{PIR05} as that of the original Bell inequality, Eq.~\eqref{2partiteBI}. It is worth noting that outcome-lifted Bell inequalities arise naturally in the study of detection loopholes in Bell experiments, see, e.g.,~\cite{Branciard2011,Cope18}.

\subsubsection{Preservation of quantum and nonsignaling violation}

As with input-lifting, we now proceed to demonstrate the invariance of maximal Bell violation with outcome-lifting.

\begin{proposition}\label{Obs:MaxValuePreserved-LO}
Lifting of outcomes preserves the quantum bound and the nonsignaling  bound of any Bell inequality, i.e., $\beta^\text{\tiny LO}_{\Q}=\beta_{\Q}$ and $\beta^\text{\tiny LO}_{\N}=\beta_{\N}$, where $\beta_{\Q}$ ($\beta^\text{\tiny LO}_\Q$) and $\beta_{\N}$ ($\beta^\text{\tiny LO}_{\N}$) are, respectively, the quantum and the nonsignaling bounds  of the original (outcome-lifted) Bell inequality.
\end{proposition}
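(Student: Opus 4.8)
The plan is to sandwich the lifted bounds between the original bounds by constructing two value-preserving maps between correlations in the original and in the outcome-lifted scenario, each of which respects membership in $\Q$ and in $\N$.

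First I would prove $\beta^\text{\tiny LO}_{\Q}\le\beta_{\Q}$ and $\beta^\text{\tiny LO}_{\N}\le\beta_{\N}$ by \emph{coarse-graining}. Given any correlation $\vecP$ in the lifted scenario, define $\vecP^{\downarrow}$ in the original scenario by merging the two outcomes $b'$ and $u$ of Bob's $y'$-th measurement, i.e. $P^{\downarrow}(ab'|xy'):=P(ab'|xy')+P(au|xy')$, and copying every other entry. Inspecting Eq.~\eqref{LO2partiteBI}, the coefficient multiplying $P(ab'|xy')$ equals that multiplying $P(au|xy')$, so $I^\text{\tiny LO}_2(\vecP)=I_2(\vecP^{\downarrow})$. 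Coarse-graining preserves set membership: summing over a subset of one party's outcomes leaves all constraints of Eq.~\eqref{Eq:NS} satisfied, hence $\vecP\in\N\Rightarrow\vecP^{\downarrow}\in\N$; and replacing the two POVM elements $M^{(2)}_{b'|y'}$, $M^{(2)}_{u|y'}$ by their sum $M^{(2)}_{b'|y'}+M^{(2)}_{u|y'}$ again yields a valid POVM, so $\vecP\in\Q\Rightarrow\vecP^{\downarrow}\in\Q$. Since $\vecP^{\downarrow}$ obeys the original quantum (nonsignaling) bound, so does the lifted value, establishing the two upper bounds.

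Next I would prove the matching lower bounds $\beta^\text{\tiny LO}_{\Q}\ge\beta_{\Q}$ and $\beta^\text{\tiny LO}_{\N}\ge\beta_{\N}$ by \emph{zero-probability embedding}. Given any $\vecP$ in the original scenario, lift it to $\vecP^{\uparrow}$ by setting $P^{\uparrow}(au|xy'):=0$ for all $a,x$ and copying every remaining entry; then $P^{\uparrow}(ab'|xy')=P(ab'|xy')$ and $I^\text{\tiny LO}_2(\vecP^{\uparrow})=I_2(\vecP)$. Appending an outcome of zero probability changes no marginal, so nonsignaling is preserved, while quantum-mechanically one simply pads Bob's $y'$-th POVM with the zero operator $M^{(2)}_{u|y'}=0$ (still positive semidefinite and normalized). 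Applying this to a $\vecP$ that saturates $\beta_{\Q}$ (resp.\ $\beta_{\N}$) produces a lifted correlation attaining the same value, which gives the lower bounds. Combining both directions yields the claimed equalities $\beta^\text{\tiny LO}_{\Q}=\beta_{\Q}$ and $\beta^\text{\tiny LO}_{\N}=\beta_{\N}$.

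The argument is structurally simple, and I expect the only point requiring genuine care to be the verification that coarse-graining preserves the \emph{full} system of nonsignaling constraints in Eq.~\eqref{Eq:NS}---not just Bob's marginal on setting $y'$---which holds because the merged outcomes $b'$ and $u$ are summed over precisely in those constraints where the $y'$-th setting appears, leaving Alice's marginals and all cross-setting consistency conditions intact. The same bookkeeping, read in reverse, justifies the embedding step, and both constructions carry over verbatim to scenarios with more parties and to Bell inequalities that are not facet-defining.
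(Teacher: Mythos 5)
Your proof is correct and takes essentially the same route as the paper: a sandwich argument with outcome coarse-graining (merging $b'$ and $u$, realized quantumly by summing the corresponding POVM elements) for $\beta^\text{\tiny LO}\le\beta$, and an outcome-splitting embedding for $\beta^\text{\tiny LO}\ge\beta$. Your zero-probability padding with $M^{(2)}_{u|y'}=0$ is simply the special case $p=1$ of the paper's arbitrary splitting $M^{(2)}_{b'|y'}=p\tilde{M}^{(2)}_{b'|y'}$, $M^{(2)}_{u|y'}=(1-p)\tilde{M}^{(2)}_{b'|y'}$, so the arguments coincide in substance.
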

\begin{proof}
From Eq.~\eqref{LO2partiteBI}, one clearly sees that the $b'$-th outcome and the $u$-th outcome of Bob's $y'$-th measurement are treated on equal footing. So, we may as well consider Bob's $y'$-th measurement as an {\em effective} $v$-outcome measurement by considering its $b'$-th outcome and its $u$-th outcome together as one outcome. Hence, if we define
\be \label{coarse-grain}
\begin{split}
	\tilde{P}(ab|xy)&=P(ab|xy),\quad y\neq y',\\
	\tilde{P}(ab|xy')&=P(ab|xy'),\quad b\not\in\{b',u\},\\
	\tilde{P}(ab'|xy')&=P(ab'|xy') + P(au|xy')
\end{split}
\ee
and substitute it back into Eq.~\eqref{LO2partiteBI}, we recover the Bell expression of the original Bell inequality [left-hand-side of Eq.~\eqref{2partiteBI}] by identifying $\tilde{P}(ab|xy)$ in $I^\text{\tiny LO}_2$ as ${P}(ab|xy)$ in $I_2$. Moreover,  if $\vec{P}$ defined for this more-outcome Bell scenario is quantum (nonsignaling), the resulting correlation obtained with the coarse-graining procedure of Eq.~\eqref{coarse-grain} is still quantum (nonsignaling). A proof of this for quantum correlations is provided in Appendix~\ref{QRpre} (see, e.g., \cite{Jul14}  for the case of nonsignaling correlation).

This implies that for any violation of the outcome-lifted Bell inequality (\ref{LO2partiteBI}) by a quantum (nonsignaling) 
correlation, there always exists another quantum (nonsignaling) correlation that gives the same amount of violation for 
the original Bell inequality   (\ref{2partiteBI}). In particular, the maximal quantum (nonsignaling) violation of these inequalities must satisfy
\be \label{OtoL}
\beta_{\N} \ge \beta^\text{\tiny LO}_{\N} \quad \beta_{\Q} \ge \beta^\text{\tiny LO}_{\Q}.
\ee

On the other hand, instead of grouping the outcomes in the outcome-lifted Bell scenario, one could also start from the original Bell scenario and (arbitrarily) {\em split} the $b'$-th outcome of Bob's $y'$-th measurement into two outcomes labeled by $b=b'$ and $b=u$. Hence, if we define $\widehat{P}(ab'|xy')$, $\widehat{P}(au|xy')$ in the outcome-lifted Bell scenario such that 
\be \label{Eq:fine_grain}
\begin{split}
	\widehat{P}(ab|xy) = P(ab|xy),\quad &y\neq y'\text{\ or\ } y=y',b\neq b',u\\
	0\le \widehat{P}(ab'|xy')&,\,\, \widehat{P}(au|xy')\le 1,\\
	\widehat{P}(ab'|xy') + \widehat{P}&(au|xy')=P(ab'|xy')
\end{split}
\ee
and substitute it into Eq.~\eqref{2partiteBI}, we recover the outcome-lifted Bell expression [Eq.~\eqref{LO2partiteBI}] by identifying $\widehat{P}(ab'|xy')$ and $\widehat{P}(au|xy')$, respectively, as  $P(ab'|xy')$ and $P(au|xy')$ in $I^\text{\tiny LO}_2$, cf Eq.~\eqref{LO2partiteBI}. 

Moreover, the correlation obtained by locally splitting the outcomes, as required in Eq.~\eqref{Eq:fine_grain}, is realizable quantum-mechanically (see Appendix~\ref{QRpre}) or in general nonsignaling theory (see \cite{Jul14}) if the original correlations are, respectively, quantum and nonsignaling. Hence, for any violation of the original Bell inequality [Eq.~\eqref{2partiteBI}] by a quantum (nonsignaling) correlation, there always exists a quantum (nonsignaling)  correlation giving the same amount of violation for the outcome-lifted Bell inequality   (\ref{LO2partiteBI}), i.e., 
\be \label{LtoO}
\beta_{\Q} \le \beta^\text{\tiny LO}_{\Q}; \quad \beta_{\N} \le \beta^\text{\tiny LO}_{\N}.
\ee 
Combining Eqs. (\ref{OtoL}) and (\ref{LtoO}), it then follows that the maximal quantum (nonsignaling) violation of any Bell inequality is {\em preserved} through the procedure of outcome-lifting, i.e., 
\be 
\beta_{\Q} = \beta^\text{\tiny LO}_{\Q}; \quad \beta_{\N} = \beta^\text{\tiny LO}_{\N}.
\ee 
This completes the proof when only one of the outcomes ($b=u$) of one of the measurements ($y=y'$) of one of the parties (Bob) is lifted. However, since more complicated outcome-lifting can be achieved by concatenating the simplest outcome-lifting presented above, the proof for the general scenarios can also be obtained by concatenating the proof given above, thus completing the proof for the general scenario.
\end{proof}

\subsubsection{Implications on self-testing}

As an implication of the above Proposition, we obtain the following result in the context of quantum theory.
\begin{cor}\label{Res:SelfTestState}
If the maximal quantum violation of a Bell inequality self-tests a quantum  state $\ket{\tilde{\psi}}$, then
 any Bell inequality obtained therefrom by outcome-lifting also self-tests $\ket{\tilde{\psi}}$.
\end{cor}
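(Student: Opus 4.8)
The plan is to leverage the coarse-graining correspondence already established in the proof of Proposition~\ref{Obs:MaxValuePreserved-LO}, exploiting the crucial fact that this coarse-graining acts only on the measurements and leaves the shared state untouched. Since self-testing of a state is, by Eq.~\eqref{Eq:StateTransformation}, a statement purely about the underlying state $\rho_{12}$, this observation ought to carry the entire argument.

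First I would take an arbitrary quantum correlation $\vec{P}$ that maximally violates the outcome-lifted Bell inequality, i.e., $I^\text{\tiny LO}_2(\vec{P})=\beta^\text{\tiny LO}_{\Q}$. By definition, $\vec{P}$ is realized through Eq.~\eqref{Eq:Born1} by some state $\rho_{12}$ together with local POVMs $\{M^{(1)}_{a|x}\}$, $\{M^{(2)}_{b|y}\}$, where Bob's $y'$-th measurement in particular contains the elements $M^{(2)}_{b'|y'}$ and $M^{(2)}_{u|y'}$.

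Next I would apply the coarse-graining of Eq.~\eqref{coarse-grain}. The key point is that the resulting correlation $\tilde{P}$ is realized by the \emph{same} state $\rho_{12}$, with only Bob's $y'$-th measurement modified to the valid POVM obtained by setting $\tilde{M}^{(2)}_{b'|y'}=M^{(2)}_{b'|y'}+M^{(2)}_{u|y'}$ and leaving every other element unchanged; positivity and the normalization $\sum_b \tilde{M}^{(2)}_{b|y'}=\one$ are manifestly preserved. By Proposition~\ref{Obs:MaxValuePreserved-LO}, $\tilde{P}$ then attains the maximal quantum violation $\beta_{\Q}$ of the original inequality~\eqref{2partiteBI}. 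Invoking the self-testing hypothesis on the original inequality---which applies to every one of its quantum maximizers---there exists a local isometry $\Phi=\Phi_1\otimes\Phi_2$ for which Eq.~\eqref{Eq:StateTransformation} holds, i.e., $\Phi\,\rho_{12}\,\Phi^\dag=\proj{\tilde{\psi}}\otimes\rho_{aux}$.

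Finally, since $\vec{P}$ and $\tilde{P}$ share the identical underlying state $\rho_{12}$, this very same isometry $\Phi$ certifies $\ket{\tilde{\psi}}$ starting from $\vec{P}$ as well, and as $\vec{P}$ was an arbitrary maximizer of the lifted inequality, the claim follows. The one point demanding care---and the sole place the argument could plausibly break---is verifying that coarse-graining preserves the \emph{state} and not merely the value of the Bell functional; once this is granted the conclusion is immediate. Indeed, the degeneracy among the lifted maximizers, arising from the freedom in splitting $M^{(2)}_{b'|y'}$ as in Eq.~\eqref{Eq:fine_grain}, affects only the measurements and never the state being self-tested, which is precisely why the state-level self-testing property survives the lifting.
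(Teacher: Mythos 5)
Your proposal is correct and follows essentially the same route as the paper: both rest on the observation that the coarse-graining of Eq.~\eqref{coarse-grain} (realized as in Appendix~\ref{QRpre}) modifies only Bob's POVM while leaving the shared state untouched, so any state maximally violating the lifted inequality also maximally violates the original one, and the state-level self-testing hypothesis then transfers. The only cosmetic difference is that the paper phrases this as a contrapositive (no isometry for the lifted maximizer implies no isometry for the original), whereas you argue directly; the mathematical content is identical.
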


\begin{proof}

For definiteness, we prove this for the specific case of $n=2$, the general proof is completely analogous. To this end, let $\rho^*_{12}$ denote an optimal quantum state that maximally violates  the outcome-lifted Bell inequality (\ref{LO2partiteBI}) with appropriate choice of POVMs $\{M^{(1)}_{a|x}\}_{a,x}$,$\{M^{(2)}_{b|y}\}_{b,y}$. 
As shown in Appendix \ref{QRpre}, this quantum state $\rho^*_{12}$ can also be used to realize an effective $v$-outcome distribution for Bob's $y'$-th measurement by combining Bob's relevant POVM elements for this measurement into a single POVM element, thereby implementing the local coarse graining given in Eq. (\ref{coarse-grain}) to give the maximal quantum violation of the original Bell inequality, Eq.~\eqref{2partiteBI}. Suppose that the maximal quantum violation of inequality~\eqref{LO2partiteBI} {\em does not} self-test the reference state $\ket{\tilde{\psi}_{12}}$, i.e., there {\em does not} exist {\em any} local isometry $\Phi=\Phi_1 \otimes \Phi_2$ such that 
 \be 
	\Phi \rho^*_{12} \Phi^\dag=  \proj{\tilde{\psi}_{12}} \otimes \rho_{aux} 
\ee 
for some $\rho_{aux}$. Then, we see that the maximal quantum violation of inequality~\eqref{2partiteBI} (attainable using $\rho^*_{12}$) also cannot self-test the reference state $\ket{\tilde{\psi}_{12}}$. The desired conclusion  follows by taking the contrapositive of the above implication.
\end{proof}

A few remarks are now in order. As with any other Bell inequality, in examining the quantum violation of an outcome-lifted Bell inequality, one may consider {\em arbitrary} local POVMs having the right number of outcomes (acting on some given Hilbert space). {\em A priori}, they do not have to be related to the optimal POVM of the original Bell inequality. However, from the proof of Proposition~\ref{Obs:MaxValuePreserved-LO}, one notices that POVMs arising from splitting the outcomes of the original optimal POVM do play an  important role in attaining the maximal quantum violation of the outcome-lifted Bell inequality. 

The arbitrariness in this splitting, nonetheless, implies that $\vecP\in\Q$ maximally violating an outcome-lifted Bell inequality are {\em not unique} (see Appendix~\ref{App:LO} for some explicit examples). Since this invalidates a necessary requirement to self-test both the state and {\em all} the local POVMs (see Proposition $C.1.$ of Ref. \cite{GKW+18}), we must thus conclude---given that such an inequality preserves the ability to self-test the underlying state---that its maximal violation cannot be used to {\em completely} self-test the underlying measurements. Using the swap method of~\cite{YVB+14}, we nevertheless show in Appendix~\ref{App:Self-test} that the quantum violation of an outcome-lifted Bell inequality may still provide robust self-testing of some of the underlying POVM elements, as well as the nature of the merged POVM elements.

\subsection{More parties}

Finally, let us consider the party-lifting of~\cite{PIR05}. Again, for simplicity, we provide hereafter explicit constructions and proofs only for the bipartite scenario, with the multipartite generalizations proceed analogously. To this end, it is expedient to write a generic bipartite Bell inequality such that
\be 
I_2:=\sum_{a,b,x,y}  B_{a,b,x,y}P(ab|xy)
\stackrel{\L}{\le} 0, \label{OnPBE}
\ee
i.e., with its local bound being zero.\footnote{This can always be achieved by (repeatedly) applying identity of the form given in Eq.~\eqref{Eq:NS} to both sides of Eq.~\eqref{2partiteBI}.} For any {\em fixed} but {\em arbitrary} input-output pair $c',z'$ of the additional party (Charlie), applying the party-lifting of~\cite{PIR05} to inequality~\eqref{OnPBE} gives rise to the tripartite Bell inequality:
\be 
I^\text{\tiny LP}_2:=\sum_{a,b,x,y}  B_{a,b,x,y}P(abc'|xyz')
\stackrel{\L}{\le} 0. \label{LPBE}
\ee
It is worth noting that such Bell inequalities has found applications in the foundations of quantum theory~\cite{Bancal:NatPhys:2012,Barnea2013}, as well as in the systematic generation~\cite{Curchod2015} of device-independent witnesses for entanglement depth~\cite{Liang:PRL:2015}.

\subsubsection{Preservation of quantum and nonsignaling violation}
\label{Sec:PartyLiftingPreservation}

That the maximal quantum and nonsignaling violation remain unchanged under Pironio's party-lifting operation~\cite{PIR05} follows directly from the results shown in Section 2.4 of~\cite{Rosset2014}, as well as a special case (with $n=k$) of Theorem 2 of~\cite{Curchod2015}. For the convenience of subsequent discussions, however, we provide below an alternative proof of this observation.

\begin{observation}\label{PartyLiftingPreservation}
Lifting of parties preserves the quantum bound and the nonsignaling  bound of any Bell inequality, i.e., $\beta^\text{\tiny LP}_{\Q}=\beta_{\Q}$ and $\beta^\text{\tiny LP}_{\N}=\beta_{\N}$, where $\beta_{\Q}$ ($\beta^\text{\tiny LP}_\Q$) and $\beta_{\N}$ ($\beta^\text{\tiny LP}_{\N}$) are, respectively, the quantum and the nonsignaling bounds  of the original (party-lifted) Bell inequality.
\end{observation}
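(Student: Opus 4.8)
The plan is to mirror the structure of the proof of Proposition~\ref{Obs:MaxValuePreserved-LO}, establishing the equality $\beta^\text{\tiny LP}_{\bullet}=\beta_{\bullet}$ (for $\bullet\in\{\Q,\N\}$) through two opposing inequalities. For the direction $\beta^\text{\tiny LP}_{\bullet}\le\beta_{\bullet}$, I would start from any quantum (nonsignaling) correlation $\vecP$ attaining some value of the party-lifted expression in Eq.~\eqref{LPBE} and construct a bipartite correlation achieving the same value of the original expression in Eq.~\eqref{OnPBE}. The natural candidate is the conditional correlation $\tilde P(ab|xy):=P(ab\,c'|xy\,z')/P(c'|z')$, i.e., Alice and Bob's marginal obtained by conditioning on Charlie's fixed outcome $c'$ given the fixed setting $z'$ (assuming $P(c'|z')>0$; the degenerate case can be handled separately). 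Because the Bell coefficients in Eq.~\eqref{LPBE} depend on Charlie only through the fixed pair $c',z'$, one has $I^\text{\tiny LP}_2(\vecP)=P(c'|z')\,I_2(\tilde P)$, and since $P(c'|z')\le 1$ this already gives the desired inequality for an expression normalized to local bound zero.

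For the reverse direction $\beta_{\bullet}\le\beta^\text{\tiny LP}_{\bullet}$, I would take a bipartite quantum (nonsignaling) correlation $\vec P$ realizing the value $I_2(\vec P)$ and extend it to a tripartite correlation by simply appending an uncorrelated Charlie who, for setting $z'$, deterministically outputs $c'$. Concretely, in the quantum case one tensors the state with an ancilla for Charlie and lets Charlie's measurement be the trivial (identity) POVM assigning probability one to $c'$; in the nonsignaling case one takes the product $P'(abc|xyz)=P(ab|xy)\,P(c|z)$ with $P(c'|z')=1$. This product correlation is manifestly quantum (nonsignaling), and it yields $I^\text{\tiny LP}_2(\vec P')=I_2(\vec P)$ exactly, establishing the second inequality. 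Combining the two bounds gives $\beta^\text{\tiny LP}_{\bullet}=\beta_{\bullet}$.

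The step I expect to require the most care is the direction $\beta^\text{\tiny LP}_{\Q}\le\beta_{\Q}$, specifically verifying that the conditioned correlation $\tilde P$ is genuinely quantum. The issue is that conditioning on Charlie's outcome is not a deterministic local operation, so one cannot naively appeal to closure of $\Q$ under local post-processing. I would instead exhibit an explicit quantum realization of $\tilde P$: writing $P(abc'|xyz')=\tr[M^{(1)}_{a|x}\otimes M^{(2)}_{b|y}\otimes M^{(3)}_{c'|z'}\,\rho_{123}]$, the conditioned state $\rho_{12}:=\tr_3[(\one\otimes\one\otimes M^{(3)}_{c'|z'})\rho_{123}]/P(c'|z')$ is a valid bipartite density operator, and $\tilde P(ab|xy)=\tr[M^{(1)}_{a|x}\otimes M^{(2)}_{b|y}\,\rho_{12}]$ with Alice's and Bob's original POVMs. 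This makes the quantum membership transparent and resolves the only genuine subtlety. For the nonsignaling case the analogous claim follows because the marginal and conditional of a nonsignaling distribution remain nonsignaling, which I would cite or verify directly from Eq.~\eqref{Eq:NS}. Finally, I would remark that, as in Proposition~\ref{Obs:MaxValuePreserved-LO}, the general multipartite party-lifting follows by concatenating the single-party argument.
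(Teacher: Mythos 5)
Your proposal is correct and follows essentially the same route as the paper's own proof: the bound $\beta^\text{\tiny LP}_{\Q}\le\beta_{\Q}$ (and likewise for $\N$) via conditioning on Charlie's fixed setting--outcome pair $(z',c')$ together with $P(c'|z')\le 1$, and the reverse bound via a product extension in which a deterministic Charlie outputs $c'$ with certainty. The only notable difference is that you spell out the quantum realizability of the conditioned correlation through the conditional state $\rho_{12}=\tr_3[(\one\otimes\one\otimes M^{(3)}_{c'|z'})\rho_{123}]/P(c'|z')$, a point the paper leaves implicit when it asserts that the independent maximization over $\vec{P}_{c'|z'}\in\Q_2$ and $P(c'|z')$ is less constraining than the maximization over $\Q_3$.
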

\begin{proof}
For a tripartite Bell scenario relevant to inequality~\eqref{LPBE}, the marginal  probability of Charlie getting the outcome $c'$ conditioned on him performing the measurement labeled by $z'$ is:
\begin{equation}
	P(c'|z')=\sum_{a,b}P(abc'|xyz').
\end{equation}
Since the party-lifted inequality of Eq.~\eqref{LPBE} is saturated with the choice of $P(c'|z')=0$, thereby making $P(abc'|xyz')=0$ for all $a,b,x,y$, the observation holds trivially if  inequality~\eqref{LPBE} {\em cannot} be violated by general nonsignaling correlations.

Conversely, if inequality~\eqref{LPBE} can be violated by some quantum, or general nonsignaling correlation, the corresponding 
$P(c'|z')$ must be nonvanishing. Henceafter, we thus assume that $P(c'|z')>0$. 
To this end, note that 
\be\label{Eq:Dfn:ConditionalBipartite}
{P}_{c'|z'}(ab|xy):=P(abc'|xyz')/P(c'|z'),
\ee
gives the probabilities of Alice and Bob obtaining the outcomes $a$ and $b$ conditioned on her (him) choosing measurement $x$ ($y$),  Charlie
measuring $z'$ and obtaining the outcome $c'$. 
Note that the vector of probabilities  $ \vec{P}_{c'|z'}:=\{{P}_{c'|z'}(ab|xy)\}$ 
is a legitimate correlation in the (original) Bell scenario corresponding to inequality~(\ref{OnPBE}).

To prove the observation, we now focus on the case of finding the quantum bound, i.e., the maximum value of the left-hand-side of Eq.~\eqref{LPBE} for quantum correlations --- the proof for the nonsignaling case is completely analogous. To this end, note that the quantum bound of inequality~\eqref{LPBE}---given the above remarks---satisfies:
\begin{align}\label{Eq:BoundPartyLifting}
\beta^\text{\tiny LP}_{\Q}&=\max_{\{P(abc'|xyz')\} \in \Q_3} \sum_{a,b,x,y}  B_{a,b,x,y}P(abc'|xyz')\nonumber\\
&=\max_{\{P(abc'|xyz')\} \in \Q_3} \sum_{a,b,x,y}  B_{a,b,x,y}{P}_{c'|z'}(ab|xy)P(c'|z')\nonumber\\
&\le\max_{\vec{P}_{c'|z'} \in \Q_2} \sum_{a,b,x,y}  B_{a,b,x,y}{P}_{c'|z'}(ab|xy)\max P(c'|z')\nonumber\\
&=\max_{\vec{P}_{c'|z'} \in \Q_2} \sum_{a,b,x,y}  B_{a,b,x,y}{P}_{c'|z'}(ab|xy)= \beta_\Q,
\end{align}
where the first inequality follows from the fact that an independent maximization  over  $\vec{P}_{c'|z'} \in \Q_2$ and $P(c'|z')$ is, in principle, less constraining than a maximization over all tripartite quantum distributions $\{P(abc'|xyz')\}$, the second-last equality follows from the fact that $P(c'|z')\le1$ for legitimate marginal probability distributions, and the last equality follows from the fact that any bipartite quantum correlation can be seen as the marginalization of a tripartite one.

To complete the proof, note that the inequality $\beta^\text{\tiny LP}_{\Q}\le \beta_{\Q}$ can indeed be saturated if
\begin{subequations}
\begin{gather}
	P(abc'|xyz') = P^*(ab|xy)P(c'|z')\quad \forall\,\, a,b,x,y,\label{Eq:Factorization}\\
	P(c'|z') = 1,\label{Eq:Deterministic}\\
	\sum_{a,b,x,y}  B_{a,b,x,y}P^*(ab|xy) =  \beta_\Q.\label{Eq:Saturation}
\end{gather}
\end{subequations}
Moreover, these equations can be simultaneously satisfied with the three parties sharing a state of the form $\ket{\psi_{123}} = \ket{\psi^*_{12}}\otimes\ket{\psi_3}$ while
employing the local measurements:
\begin{equation}
	M^{(1)}_{a|x} = M^{(1*)}_{a|x},\quad M^{(2)}_{b|y} = M^{(2*)}_{b|y},\quad M^{(3)}_{c|z'} = \one\delta_{c,c'},
\end{equation}
where $\ket{\psi^*_{12}}, \{M^{(1*)}_{a|x}\}_{a,x}, \{M^{(2*)}_{b|y}\}_{b,y}$ consistute a maximizer for the (original) Bell inequality of Eq.~\eqref{OnPBE}, i.e., 
\begin{equation}
\begin{split}
	\beta_\Q&= \max_{\vec{P} \in \Q_2} \sum_{a,b,x,y}  B_{a,b,x,y}{P}(ab|xy)\\
	&= \sum_{a,b,x,y}  B_{a,b,x,y}{P^*}(ab|xy) \\
	&= \sum_{a,b,x,y}  B_{a,b,x,y}\bra{\psi^*_{12}} M^{(1*)}_{a|x}\otimes M^{(2*)}_{b|y}\ket{\psi^*_{12}}.
\end{split}
\end{equation}
\end{proof}

\subsubsection{Implications on self-testing}

As an implication of the above observation, we obtain the following result in the context of quantum theory.
\begin{cor}\label{Res:PartyLifting}
If a Bell inequality self-tests $\ket{\tilde{\psi}}$ and some reference POVMs $\{\tilde{M}^{(1)}_{a|x}\}_{a,x}, \{\tilde{M}^{(2)}_{b|y}\}_{b,y}$ etc., then the maximal quantum violation of  any Bell inequality obtained therefrom via party-lifting also self-tests the same state and the same local POVMs for an appropriate subset of parties.
\end{cor}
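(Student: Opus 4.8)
The plan is to exploit the saturation analysis already embedded in the proof of Observation~\ref{PartyLiftingPreservation}. Suppose a tripartite quantum strategy, consisting of a state $\rho_{123}$ together with local POVMs $\{M^{(1)}_{a|x}\}$, $\{M^{(2)}_{b|y}\}$, $\{M^{(3)}_{c|z'}\}$, attains the maximal quantum violation $\beta^\text{\tiny LP}_\Q=\beta_\Q$ of the party-lifted inequality~\eqref{LPBE}. Then every inequality appearing in the chain~\eqref{Eq:BoundPartyLifting} must be saturated, which forces two conditions: first, that Charlie's marginal is deterministic, $P(c'|z')=1$; and second, that the conditional correlation $\vec{P}_{c'|z'}$ defined in Eq.~\eqref{Eq:Dfn:ConditionalBipartite} is itself an \emph{optimal} realization of the original bipartite inequality.

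First I would argue that $P(c'|z')=1$ renders Charlie's relevant measurement element effectively trivial. Writing $\Pi:=\one\otimes\one\otimes M^{(3)}_{c'|z'}$, positivity of $M^{(3)}_{c'|z'}$ together with $\tr(\Pi\,\rho_{123})=P(c'|z')=1=\tr\rho_{123}$ yields $\tr[(\one-\Pi)\rho_{123}]=0$, and since $\one-\Pi\succeq0$ this forces $\Pi\,\rho_{123}=\rho_{123}$. Hence, with $\rho_{12}:=\tr_3\rho_{123}$, the conditional Alice--Bob correlation coincides with the one produced by $\rho_{12}$ under the \emph{same} local POVMs,
\be
	P_{c'|z'}(ab|xy)=P(abc'|xyz')=\tr\!\left[(M^{(1)}_{a|x}\otimes M^{(2)}_{b|y})\,\rho_{12}\right].
\ee

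Next I would invoke the self-testing hypothesis of the seed inequality. Since $(\rho_{12},\{M^{(1)}_{a|x}\},\{M^{(2)}_{b|y}\})$ is a bipartite quantum realization via Eq.~\eqref{Eq:Born2} that attains $\beta_\Q$, the assumed self-testing property supplies a local isometry $\Phi=\Phi_1\otimes\Phi_2$ obeying Eqs.~\eqref{Eq:StateTransformation} and~\eqref{Eq:POVMTransformation} for $\rho_{12}$ and the reference objects $\ket{\tilde{\psi}}$, $\{\tilde{M}^{(1)}_{a|x}\}$, $\{\tilde{M}^{(2)}_{b|y}\}$. Extending $\Phi$ trivially on Charlie's factor then certifies the same reference state and the same reference POVMs for the pair of parties appearing in the seed inequality---this is the ``appropriate subset of parties'' of the statement. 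The multipartite generalization is identical, with each lifted party's measurement forced to act as the identity on the support of its reduced state.

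The delicate step---and the one I expect to be the main obstacle---is the passage from an \emph{arbitrary} tripartite maximizer to a genuine bipartite maximizer to which the seed self-test applies verbatim. The subtlety is that $\rho_{12}$ may be mixed and correlated with Charlie's system, so a priori it is not obvious that conditioning on Charlie's outcome $c'$ leaves the Alice--Bob statistics untouched; one must also check that the self-testing definition tolerates a mixed $\rho_{12}$, which it does since $\rho_{aux}$ in Eq.~\eqref{Eq:StateTransformation} can absorb the mixedness. The identity $\Pi\,\rho_{123}=\rho_{123}$ is precisely what secures the reduction: it guarantees that no weight is lost or reshaped by the conditioning, so that $\vec{P}_{c'|z'}$ is literally the correlation of $\rho_{12}$, and the self-testing guarantee of the original inequality---which quantifies over \emph{all} its maximizers---transfers directly.
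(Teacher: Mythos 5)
Your proof is correct and follows essentially the same route as the paper's: saturation of the chain in Eq.~\eqref{Eq:BoundPartyLifting} forces $P(c'|z')=1$ and optimality of the conditional bipartite correlation, which is then realized by $\tr_3\rho_{123}$ with the unchanged POVMs, so the seed inequality's self-testing hypothesis applies directly to that bipartite realization. The only difference is that your operator identity $\Pi\,\rho_{123}=\rho_{123}$ makes rigorous, at the operator level, the reduction step that the paper carries out implicitly at the level of probability distributions via the factorization $P(abc'|xyz')=P(ab|xy)P(c'|z')$.
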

\begin{proof}
In the following, we use inequality~\eqref{LPBE} to illustrate how the proof works in the tripartite case. Note from Eq.~\eqref{Eq:BoundPartyLifting} that when the party-lifted Bell inequality of Eq.~(\ref{LPBE}) is violated to its quantum maximum $\beta_\Q$, the marginal distribution of Charlie necessarily satisfies $P(c'|z')=1$. It then follows from Eq.~\eqref{Eq:Dfn:ConditionalBipartite} that 
 \be\label{Eq:Factorization2}
 P(abc'|xyz')=P(ab|xy)P(c'|z') \quad \forall\,\, a,b,x,y.
 \ee
 Furthermore, from Eq.~\eqref{Eq:BoundPartyLifting}, this tripartite distribution gives the quantum bound of inequality~\eqref{LPBE} if and only if the marginal distributions $P(ab|xy)$ of Eq.~\eqref{Eq:Factorization2} violates the original Bell inequality of Eq.~\eqref{OnPBE} to its quantum bound.
 Therefore, if the original Bell inequality self-tests the $2$-partite entangled  state $\ket{\tilde{\psi}_{12}}$, then  for any tripartite density matrix $\rho_{123}$ leading to the quantum maximum of inequality Eq.~\eqref{LPBE}, there must exist a local isometry $\Phi=\Phi_1 \otimes \Phi_2 $ such that 
 \be
 \begin{split}
 &\Phi \tr_{3} \left(\rho_{123}\right)   \Phi^\dag=\proj{\tilde{\psi}}\otimes \rho_{aux},\\
 	&\Phi \left[ M^{(1)}_{a|x} \otimes M^{(2)}_{b|y}  \tr_{3} \left(\rho_{123}\right) \right] \Phi^\dag  \\
	= &\left( \tilde{M}^{(1)}_{a|x}  \otimes \tilde{M}^{(2)}_{b|y} \proj{\tilde{\psi}} \right) \otimes \rho_{aux}. 
 \end{split}
 \ee
 where $\rho_{aux}$ is some auxiliary density matrix acting on other degrees of freedom of Alice and Bob's subsystem. In other words, if the quantum maximum of the original Bell inequality can be used to self-test $\ket{\tilde{\psi}}$ and  reference POVMs $\{\tilde{M}^{(1)}_{a|x}\}_{a,x}$, $\{\tilde{M}^{(2)}_{b|y}\}_{b,y}$, so does the quantum maximum of the party-lifted Bell inequality.
\end{proof}

\subsubsection{Implications on device-independent certification of entanglement depth}

In Theorem 2 of~\cite{Curchod2015}, it was shown that  if
\begin{align}\label{Eq:size-k}
	\sum_{k_1\cdots k_n,j_1\cdots j_n}  B_{\vec{k},\vec{j}}P(\vec{k}|\vec{j})\stackrel{\mathcal{R}_{n,\ell}}{\le} 0,
\end{align}
is satisfied by an $n$-partite resource $\mathcal{R}$ (quantum or nonsignaling) that has a group size of $\ell$, its lifting to $(n+h)$ parties also holds for the same kind of resource of group size $\ell$:
\begin{align}\label{Eq:size-k:n+h}
	\sum_{k_1\cdots k_n,j_1\cdots j_n}  B_{\vec{k},\vec{j}}P(\vec{k},\vec{o}|\vec{j},\vec{s})\stackrel{\mathcal{R}_{n+h,\ell}}{\le} 0,
\end{align}
where $\vec{o}$ ($\vec{s}$) is any {\em fixed}, but {\em arbitrary} string of outputs (inputs) for the $h$ additional parties. For the case of $\ell=n$, the above result reduces to Observation~\ref{PartyLiftingPreservation} discussed in Sec.~\ref{Sec:PartyLiftingPreservation}.

When the considered resource is restricted to shared quantum correlations, inequality~\eqref{Eq:size-k} and inequality~\eqref{Eq:size-k:n+h} are instances of so-called 
device-independent witnesses for entanglement depth~\cite{Liang:PRL:2015}, i.e., Bell-like inequalities capable of certifying---directly from the observed correlation---a lower bound on the entanglement depth~\cite{Sorensen:PRL:2001} of the measured system. More specifically, if the observed quantum value of the left-hand-side of Eq.~\eqref{Eq:size-k} or Eq.~\eqref{Eq:size-k:n+h} is greater than 0, then one can certify that the locally measured quantum state must have an entanglement depth of at least $\ell+1$.

Although the above result of~\cite{Curchod2015} can be applied to an arbitrary number of $(n+h)$ parties, Observation~\ref{PartyLiftingPreservation} implies that if the seed inequality is that applicable to an $n$-partite Bell scenario, the extended scenario can never be used to certify an entanglement depth that is larger than $n$. This follows from the fact that the maximal quantum value of these party-lifted Bell-like inequalities is the same as the original Bell-like inequality [cf. Eq.~\eqref{Eq:size-k}], and is already attainable using a quantum state of entanglement depth $n$.

\section{Concluding Remarks}\label{con}

Lifting,  as introduced by Pironio \cite{PIR05}, is a procedure that allows one to systematically construct Bell inequalities for {\em all}  Bell scenarios starting from a Bell inequality applicable to a simpler scenario. It is known that Pironio's lifting preserves the facet-defining property of Bell inequalities, and thus lifted Bell inequalities (in particular, those lifted from the CHSH Bell inequality) can be found in {\em all} nontrivial Bell scenarios. In this work, we show that lifting leaves the maximal quantum and non-signaling value of Bell inequalities unchanged. 

Naturally, one may  ask whether the quantum state and local measurements maximally violating a lifted Bell inequality are  related to that of the original Bell inequality. Indeed, we show that Pironio's lifting also preserves the self-testability of a quantum state. Hence, the quantum state maximally violating a lifted Bell inequality is---modulo irrelevant local degrees of freedom---the same as that of the original inequality. Likewise, the self-testability of given local measurements is preserved using {\em any} but the outcome-lifting procedure.

The maximizers of lifted Bell inequalities are, as we show, generally not unique. Consequently, it is impossible to use the observed quantum value of such an inequality to self-test both the underlying state and {\em all} the local measurements: in the case of an input-lifted Bell inequality, no conclusions can be drawn regarding the additional measurements that do not appear in the inequality; in the case of a party-lifted Bell inequality, nothing can be said about the measurements of the additional party; in the case of an output-lifted Bell inequality, the self-testing of all the local POVM elements is impossible, but the self-testing of their combined effect seems possible. In fact, our numerical results (see Appendix~\ref{App:Self-test}) suggest that such a self-testing is just as robust as the original Bell inequality. Thus, Bell inequalities lifted from CHSH serve as generic examples whose maximal quantum violation  can be used to self-test a state, but not the underlying measurements in its entirety. 

Notice also that the non-uniqueness mentioned above evidently becomes more and more pronounced as the number of ``irrelevant" degrees of freedom increases, for example, by repeatedly applying lifting to a given Bell inequality. Since {\em only} correlation $\vec{P}$ belonging to the boundary of $\Q$ could violate a linear Bell inequality maximally, as the complexity of the Bell scenario (say, in terms of the number of measurements, outcomes, or parties) increases, it is conceivable that one can  always find a flat boundary of $\Q$ (corresponding to those of the lifted Bell inequality) with increasing dimension. Proving this statement rigorously and finding the exact scaling of the dimension of these flat boundaries would be an interesting direction to pursue for future research.

Besides, it will be interesting to see---in comparison with the original Bell inequality---whether the robustness in self-testing that we have observed for a particular version of the outcome-lifted CHSH Bell inequality is generic. From our example for the outcome-lifted CHSH inequality, it becomes clear that self-testing of the combined POVM elements is (sometimes) possible even if the self-testing of all individual POVM elements is not. This possibility opens another direction of research in the context of self-testing. In addition, our results also prompted the following question: does there exist a physical situation (say, the observation of the maximal quantum value of some Bell inequality) where the underlying measurements can be self-tested, but not the underlying state? Since the self-testing of measurements is seemingly more demanding than that for a quantum state, it is conceivable that no such examples can be constructed. Proving that this is indeed the case, however, clearly lies beyond the scope of the present paper.

{\em Note added:} While completing this manuscript, we became aware of the work of~\cite{Rosset2019private} which also discusses, among others, extensively the properties of liftings, as well as the work of~\cite{JedPrivate}, which exhibits examples of quantum correlations that can only be used to self-test the measured quantum state but not the underlying measurements.

\begin{acknowledgements}
We thank Jean-Daniel Bancal, J{\k{e}}drzej Kaniewski, Denis Rosset, and Ivan \v{S}upi\'c for useful discussions.  This work is supported by the Foundation for the Advancement of Outstanding Scholarship, Taiwan as well as the Ministry of Science and Technology, Taiwan (Grants No. 104-2112-M-006-021-MY3, No. 107-2112-M-006-005-MY2,  107-2627-E-006-001, 108-2811-M-006-501, and 108-2811-M-006-515).
\end{acknowledgements}

\appendix

\section{Examples of $\vecP\in\Q$ violating lifted Bell inequalities maximally }
\label{Examples}

To illustrate the non-unique nature of the maximizers of lifted Bell inequalities, consider the CHSH Bell inequality~\cite{Clauser69}:
\begin{align}\label{chshineq}
	\sum_{x,y,a,b=0,1} (-1)^{xy+a+b} P(ab|xy)\stackrel{\L}{\le} 2
\end{align}
as our seed inequality. Since this is a Bell inequality defined in the simplest Bell scenario (with two binary-output per party), its liftings can be found in all nontrivial Bell scenarios.

The quantum bound and nonsignaling bound of the above Bell inequality
are given, respectively, by  $\beta_{\Q}=2\sqrt{2}$ and $\beta_{\N}=4$. 
It is known~\cite{PR92} that the maximal quantum violation of the CHSH inequality
can be used to self-test (up to local isometry) the two-qubit maximally 
entangled state $\ket{\phi^+}=\left(\ket{00}+\ket{11}\right)/\sqrt{2}$ and 
the Pauli observables $\{\sigma_z,\sigma_x\}$ 
on one side and the Pauli observables $\{(\sigma_x+\sigma_z)/\sqrt{2},(\sigma_x-\sigma_z)/\sqrt{2}\}$ on the other. 
Thus, the correlation that gives the quantum maximum of the  inequality~\eqref{chshineq} is  {\em unique} 
and is given by
\be\label{Eq:TsirelsonPoint}
P_Q(ab|xy)=\tfrac{1}{4}+(-1)^{a+ b + x y}\tfrac{\sqrt{2}}{8}, \,\,\,  a,b,x,y\in\{0,1\},
\ee
where the $+1$-outcome of the observables is identified with the 0-th outcome in the conditional outcome probability distributions.

\subsection{Lifting of Inputs}
\label{App:LI}

For input-lifting, consider now a bipartite Bell scenario where Bob has instead three binary inputs  $(y=0,1,2)$. In this new Bell scenario, the following Bell inequality:
\begin{align}\label{liftinpchshineq}
	I^\text{\tiny LI-CHSH}_2:=\sum_{x,y,a,b=0,1} (-1)^{x y + a + b}  \tilde{P}(ab|xy)  \stackrel{\L}{\le} 2,
\end{align}
with $\{\tilde{P}(ab|xy)\}_{a,b,x=0,1,y=0,1,2}$ can be obtained by applying input-lifting to inequality~\eqref{chshineq}.

To illustrate the non-uniqueness of its maximizers, one may employ, e.g.,  either of the two trivial measurements for the third measurement $(y=2)$:
\begin{equation}
	M^{(2)}_{b|2}=\one\delta_{b,0}\quad\text{or}\quad M^{(2)}_{b|2}=\one\delta_{b,1}.
\end{equation}
Correspondingly, one obtains, in addition to [cf. Eq.~\eqref{Eq:TsirelsonPoint}] 
\begin{subequations}\label{Eq:PTilde}
\begin{equation}
	\tilde{P}^\text{\tiny LI}_{Q_1}(ab|x,y)=\tilde{P}^\text{\tiny LI}_{Q_2}(ab|x,y)=P_Q(a,b|x,y)
\end{equation}
for $a,b,x,y=0,1$, the distributions
\begin{equation}
	\tilde{P}^\text{\tiny LI}_{Q_1}(ab|xy)=\frac{1}{2}\delta_{b,0}\quad\text{and}\quad \tilde{P}^\text{\tiny LI}_{Q_2}(ab|x,y)=\frac{1}{2}\delta_{b,1},
\end{equation}
\end{subequations}
for $a,b,x=0,1$ but $y=2$.\footnote{More abstractly, these correlations can also be obtained from Eq.~\eqref{Eq:TsirelsonPoint} by 
applying input operations as given by Eq. (10) in \cite{Jul14}.} 

It is then easily verified that both these correlations violate inequality~\eqref{liftinpchshineq} to its quantum maximum of $2\sqrt{2}$. In fact, since the Bell expression of Eq.~\eqref{liftinpchshineq} is linear in $\tilde{P}$, it follows that an arbitrary convex combination of $\tilde{P}^\text{\tiny LI}_{Q_1}$ and $\tilde{P}^\text{\tiny LI}_{Q_2}$
\be\label{QBLCHSH}
	\tilde{P}(ab|xy)=p\tilde{P}^\text{\tiny LI}_{Q_1}(ab|xy) +(1-p)\tilde{P}^\text{\tiny LI}_{Q_2}(ab|xy),
\ee
where $0\le p \le 1$, must also give the maximal quantum value of Bell inequality~\eqref{liftinpchshineq}. Moreover, from the convexity of the set of quantum correlations $\Q$, we know that an arbitrary convex combination of the two correlations given above is also quantum realizable. Geometrically, this means that the set of $\tilde{P}\in\Q$ defined by Eq.~\eqref{QBLCHSH} forms a one-dimensional flat region of the quantum boundary. In Fig.~\ref{Fig1}, we show a two-dimensional slice on the space of correlations spanned by $\tilde{P}^\text{\tiny LO}_{Q_1}$, $\tilde{P}^\text{\tiny LO}_{Q_2}$, and the uniform distribution $\tilde{P}_0$. Note that on this peculiar slice, even $\Q$ appears to be a polytope.

\subsection{Lifting of Outcomes}
\label{App:LO}

For output-lifting, consider a bipartite Bell scenario where Bob's measurements have instead three outcomes  $(b=0,1,2)$. In this  Bell scenario, the following Bell inequality:
\begin{align}\label{liftchshineq}
	I^\text{\tiny LO-CHSH}_2:=\sum_{x,y,a=0,1}\sum_{b=0,1,2}\!\!\! (-1)^{x y + a+b} P(ab|xy)  \stackrel{\L}{\le} 2\end{align}
can be obtained by applying outcome-lifting to the $b=0$ outcome of Bob's measurements in inequality~\eqref{chshineq}.
Here, $b=2$ is the new outcome.

Indeed, it is readily seen that the  two correlations:
\begin{gather}
\label{EQCL1}
P^\text{\tiny LO}_{Q_1}(ab|xy)  
= \left[\tfrac{1}{4}+(-1)^{a+ b + x y}\tfrac{\sqrt{2}}{8}\right](1-\delta_{b,2}),\\
\label{EQCL2}
P^\text{\tiny LO}_{Q_2}(ab|xy)  
=\left[\tfrac{1}{4}+(-1)^{a+ b + x y}\tfrac{\sqrt{2}}{8}\right](1-\delta_{b,0})
\end{gather}
as well as
\begin{equation}\label{Eq:OL34}
\begin{split}
	P^\text{\tiny LO}_{Q_3}(ab|xy)&=P^\text{\tiny LO}_{Q_1}(ab|xy)\delta_{y,0}+P^\text{\tiny LO}_{Q_2}(ab|xy)\delta_{y,1},\\
	P^\text{\tiny LO}_{Q_4}(ab|xy)&=P^\text{\tiny LO}_{Q_1}(ab|xy)\delta_{y,1}+P^\text{\tiny LO}_{Q_2}(ab|xy)\delta_{y,0}
\end{split}
\end{equation}
all give rise to the quantum maximum of $2\sqrt{2}$ for the Bell inequality of Eq.~\eqref{liftchshineq}.

To see that they are indeed quantum realizable, we note that the correlation given in Eq. (\ref{EQCL1}) can be produced by employing the quantum strategy used to produce the correlation given in Eq.~\eqref{Eq:TsirelsonPoint}. By construction, Bob's measurement only produces two outcomes labeled by $b=0$ and $b=1$, and thus the $b=2$ outcome never appears, as required in Eq.~\eqref{EQCL1}. 
To obtain the correlation given in Eq. (\ref{EQCL2}), one may start from the correlation of Eq.~\eqref{EQCL1} and apply the classical relabeling of $b=0\leftrightarrow b=2$. 

Similarly, the two correlations of Eq.~\eqref{Eq:OL34} can be realized by first implementing the quantum strategy that realizes $\vec{P}^\text{\tiny LO}_{Q_1}$, followed by applying the classical relabeling of $b=0\leftrightarrow b=2$ depending on whether $y=0$ or $y=1$ \footnote{This belongs to the class of outcome operation given by Eq. (9) in \cite{Jul14}.}. Since $\{\vec{P}^\text{\tiny LO}_{Q_i}\}_{i=1}^4$ forms a linearly independent set, and an arbitrary convex combination of them also gives the quantum bound, we thus see that the quantum face\footnote{The set of  quantum correlations that gives the quantum bound of a Bell inequality is called quantum face, see~\cite{GKW+18}.} of the outcome-lifted inequality of \eqref{liftchshineq} is (at least) three-dimensional.

\begin{figure}[t!]
\includegraphics[width=7.5cm]{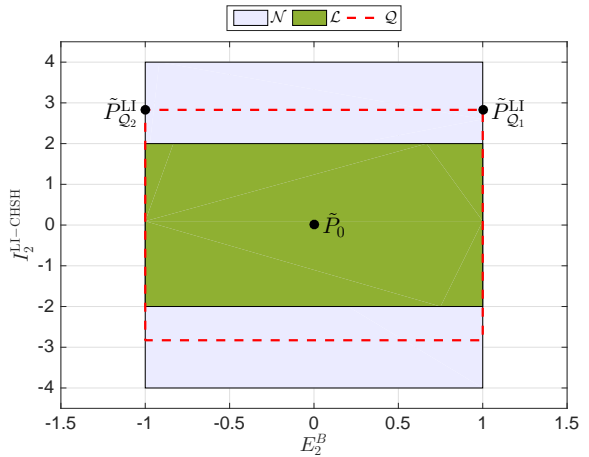}
\caption{A two-dimensional slice in the input-lifted space of correlations spanned by $\tilde{P}^\text{\tiny LO}_{Q_1}$, $\tilde{P}^\text{\tiny LO}_{Q_2}$ [see Eq.~\eqref{Eq:PTilde}] and the uniform distribution $\tilde{P}_0$. From the innermost to the outermost, we have, respectively, the set of Bell-local correlations $\L$ (green), the set of quantum correlations $\Q$ (red, dashed boundary), and the set of nonsignaling correlations $\N$ (mauve). To illustrate the degeneracy in the maximally-violating correlations,  we have chosen the input-lifted Bell inequality of Eq.~\eqref{liftinpchshineq} and the marginal correlator for $y=2$, i.e., $E^B_2=\tilde{P}(b=0|y=2)-\tilde{P}(b=1|y=2)$ as, respectively,  the vertical and horizontal axis of this plot.   As opposed to the two-dimensional slices shown in~\cite{GKW+18}, the set of quantum correlations $\Q$ appears to be a rectangle on this slice.
\label{Fig1}}
\end{figure}

\subsection{Lifting of Party}

Geometrically, party-lifting also introduces degeneracy in the maximizers of a Bell inequality. For example, a possible party-lifting of the CHSH Bell inequality of Eq.~\eqref{chshineq} to the three-party, two-input, two-output Bell scenario reads as:
\begin{equation}\label{Eq:CHSH:PartyLifted}
	\sum_{x,y,a,b=0,1} \left\{\left[(-1)^{xy+a+b}-\tfrac{1}{2}\right] P(ab0|xy0)\right\}\stackrel{\L}{\le} 0
\end{equation}

As mentioned in the proof of Corollary~\ref{Res:PartyLifting}, maximizers of this Bell inequality must be such that the bipartite marginal distribution $P(ab|xy)$ violates CHSH Bell inequality maximally while the marginal distribution for the third party must satisfy $P(0|0)=1$. However, these conditions do not impose any constraint on the other marginal distribution $P(c|z)$ for $z\neq 0$. In particular, both the choice of $P(c|1)=\delta_{c,0}$ and $P(c|1)=\delta_{c,1}$ would fulfill the above requirement. As such, although the quantum face of the CHSH Bell inequality is a point in the correlation space, the quantum face of the party-lifted Bell inequality of Eq.~\eqref{Eq:CHSH:PartyLifted} has become one-dimensional.

\section{Quantum realizability of distributions obtained by grouping and splitting outcomes}\label{QRpre}

In this Appendix, we provide the details showing how one can---while preserving between the quantum violation of a Bell inequality and its outcome-lifted version---realize quantum-mechanically a fewer-outcome (more-outcome) correlation if the original more-outcome (fewer-outcome) correlation is quantum.

\subsection{Grouping of outcomes}

 Suppose that the joint probabilities $P(ab'|xy')$ and $P(au|xy')$ on the right-hand-side of Eq. (\ref{coarse-grain}) are realized by
 a quantum state $\rho_{12}$ with the POVM $\{M^{(1)}_{a|x}\}_{a,x}$ on Alice's side and $\{M^{(2)}_{b|y}\}_{b,y}$ 
 on Bob's side, cf. Eq.~\eqref{Eq:Born2}. Then the joint probabilities $\tilde{P}(ab'|xy')$ appearing on the left-hand-side  of Eq. (\ref{coarse-grain}), which corresponds to an effective $v$-outcome distribution, is realizable  by the same quantum state $\rho_{12}$ with the same POVM $\{M^{(1)}_{a|x}\}_{a,x}$ on Alice's side 
 and the following POVM $\tilde{M}^{(2)}_{b|y}$ on Bob's side:\footnote{That $\tilde{M}^{(2)}_{b|y}$ satisfies both the positivity constraints and the normalization constraints is evident from Eq.~\eqref{Eq:POVM:Grouping}.}
 \begin{equation}\label{Eq:POVM:Grouping}
 \begin{split}
 	\tilde{M}^{(2)}_{b|y} &= M^{(2)}_{b|y}\quad y\neq y',\\
 	\tilde{M}^{(2)}_{b|y'} &= M^{(2)}_{b|y'}\quad b\not\in\{b',u\},\\
 	\tilde{M}^{(2)}_{b'|y'} &= M^{(2)}_{b'|y'}+ M^{(2)}_{u|y'}.
 \end{split}
 \end{equation}
With this choice, it follows from 
\begin{equation}\label{Eq:Born3}
	\tilde{P}(ab'|xy')=\tr (M^{(1)}_{a|x} \otimes  \tilde{M}^{(2)}_{b'|y'}  \rho_{12}),
\end{equation}
Eq.~\eqref{Eq:POVM:Grouping}, and 
 \begin{align}\label{Eq:BornRule:Combined}
 \tr (M^{(1)}_{a|x} \otimes \tilde{M}^{(2)}_{b'|y'} \rho_{12})&=\tr (M^{(1)}_{a|x} \otimes {M}^{(2)}_{b'|y'} \rho_{12}) \nonumber \\ 
 &+\tr ( M^{(1)}_{a|x} \otimes {M}^{(2)}_{u|y'} \rho_{12})
 \end{align}
 that Eq.~\eqref{coarse-grain} is satisfied, and hence that the violation-preserving fewer-outcome correlation is indeed attainable by coarse graining, i.e., the grouping of outcomes.
 
 \subsection{Splitting of outcomes}
 \label{App:Split}
 
On the other hand, if we instead start from the original (fewer-outcome) Bell scenario, then the joint probabilities appearing on the right-hand-side  of Eq.~\eqref{Eq:fine_grain}, which corresponds to a $v+1$-outcome distribution, can be realized by, e.g.,  employing the same quantum state $\rho_{12}$ with the same POVM $\{M^{(1)}_{a|x}\}_{a,x}$ on Alice's side and the following POVM $\tilde{M}^{(2)}_{b|y}$ on Bob's side:
 \begin{equation}\label{Eq:POVM:Splitting}
 \begin{split}
 	M^{(2)}_{b|y} &= \tilde{M}^{(2)}_{b|y}\quad y\neq y',\\
 	M^{(2)}_{b|y'} &= \tilde{M}^{(2)}_{b|y'}\quad b\not\in\{b',u\},\\
 	M^{(2)}_{b'|y'} = p\tilde{M}^{(2)}_{b'|y'},&\quad M^{(2)}_{u|y'} = (1-p)\tilde{M}^{(2)}_{b'|y'},
 \end{split}
 \end{equation}
for arbitrary $0\le p \le 1$. The positivity of the left-hand-side of Eq.~\eqref{Eq:POVM:Splitting} and their normalization are evident from their definition. Moreover, using 
Eq.~\eqref{Eq:BornRule:Combined} and
\begin{equation}\label{Eq:Born3}
	\widehat{P}(ab|xy)=\tr (M^{(1)}_{a|x} \otimes  {M}^{(2)}_{b'|y'}  \rho_{12}),
\end{equation}
it is easy to see that Eq.~\eqref{Eq:fine_grain} holds with the assignment given in Eq.~\eqref{Eq:POVM:Splitting}. Hence, the POVM given by the left-hand-side of Eq.~\eqref{Eq:POVM:Splitting} indeed realizes the required violation-preserving more-outcome correlation by splitting the $b'$-outcome of Bob's $y'$-th measurement.

\section{Robust self-testing based on the quantum violation of the outcome-lifted CHSH inequality}
\label{App:Self-test}

We show in this Appendix that self-testing via the quantum violation of the outcome-lifted inequality of Eq.~\eqref{liftchshineq} is robust. In this regard, note that the maximal quantum violation of inequality~\eqref{liftchshineq} can also be achieved by Alice and Bob sharing the following two-qubit maximally entangled state
\begin{equation}\label{Eq:TargetState}
	\ket{\tilde{\psi}} = \cos\frac{\pi}{8}\frac{\ket{00}-\ket{11}}{\sqrt{2}} + \sin\frac{\pi}{8}\frac{\ket{01}+\ket{10}}{\sqrt{2}},
\end{equation}
while performing the optimal qubit measurements for Alice:
\begin{equation}\label{Eq:POVMA}
\begin{aligned}
\tilde{M}_{0|0}^{(1)} = \frac{1}{2}(\openone + \sigma_z),\quad \tilde{M}_{1|0}^{(1)} = \frac{1}{2}(\openone - \sigma_z),\\
\tilde{M}_{0|1}^{(1)} = \frac{1}{2}(\openone + \sigma_x),\quad \tilde{M}_{1|1}^{(1)} = \frac{1}{2}(\openone - \sigma_x),\\
\end{aligned}
\end{equation}
and for Bob:
\begin{equation}\label{Eq:POVMB}
\begin{aligned}
\tilde{M}_{0|0}^{(2)} = E_{0|0},\quad \tilde{M}_{1|0}^{(2)} = \frac{1}{2}(\openone - \sigma_z),\quad \tilde{M}_{2|0}^{(2)} = E_{2|0},\\
\tilde{M}_{0|1}^{(2)} = E_{0|1},\quad \tilde{M}_{1|1}^{(2)} = \frac{1}{2}(\openone - \sigma_x),\quad \tilde{M}_{2|1}^{(2)} = E_{2|1},\\
\end{aligned}
\end{equation}
where $\{E_{b|y}\}_{b=0,2}$ are any valid POVM elements satisfying $\sum_{b=0,2} E_{b|y} = \openone - M_{1|y}^{(2)}$ for all $y$. 

Notice that Eq.~\eqref{Eq:StateTransformation} and Eq.~\eqref{Eq:POVMTransformation} only hold for the case of perfect self-testing of the reference state and reference measurements. To demonstrate robust self-testing for the above reference state $\ket{\tilde{\psi}}$ and reference measurements $\{\tilde{M}^{(1)}_{a|x}\}_{a,x}$, $\{\tilde{M}^{(2)}_{b|y}\}_{b,y}$, we follow the approach of~\cite{YVB+14} to arrive at statements saying that if the observed quantum violation of inequality~\eqref{liftchshineq} is close to its maximal value, then (1) the measured system contains some degrees of freedom that has a high fidelity with respect to the reference state $\ket{\tilde{\psi}}$, and (2) with high probability, the uncharacterized measurement devices function like $\{\tilde{M}^{(1)}_{a|x}\}_{a,x}$, $\{\tilde{M}^{(2)}_{b|y}\}_{b,y}$ acting on the same degrees of freedom.

\subsection{Robust self-testing of the reference state}

To this end, we shall make use of the swap method proposed in~\cite{YVB+14}. The key idea is to introduce local swap operators $\Phi_1,\Phi_2$ so that the state acting on Alice's and Bob's Hilbert space (of unknown dimension) gets swapped locally with some auxiliary states of trusted Hilbert space dimension (qubit in our case). To better understand how this works, let us first consider an  example with characterized devices before proceeding to the case where the devices are uncharacterized.

For this purpose, let us concatenate the following controlled-not (CNOT) gates
\begin{equation}\label{Eq:UV}
\begin{aligned}
&U_1 = \openone\otimes \proj{0} + \sigma_x\otimes\proj{1},\\
&V_1 = \proj{0}\otimes\openone + \proj{1}\otimes\sigma_x,
\end{aligned}
\end{equation}
to obtain the (two-qubit) swap gate $\Phi_1 = U_1V_1U_1$ acting on $\mathcal{H}_A\otimes\mathcal{H}_{A'}$ (see Sec.~\ref{Sec:SelfTest}). We may define a swap operator $\Phi_2=U_2V_2U_2$ acting on Bob's systems in exactly the same way. Importantly, one notices from Eq.~\eqref{Eq:POVMA} and Eq.~\eqref{Eq:POVMB} that it is possible to express the individual unitaries in terms of the POVM elements leading to the maximal quantum violation of inequality~\eqref{liftchshineq}. For example, one may take
\begin{equation}
	\sigma_z = \openone - 2 \tilde{M}_{1|0}^{(i)},\quad \sigma_x = \openone - 2 \tilde{M}_{1|1}^{(i)}\quad\forall\,\, i=1,2.
\end{equation}

Moreover, if we define the global swap gate by $\Phi=\Phi_1\otimes\Phi_2$ and denote the state acting on $\mathcal{H}_A\otimes\mathcal{H}_B$ by $\rho_{12}$, then the ``swapped'' state is:
\begin{equation}\label{Eq:SwappedState}
	\rho^{\text{\tiny SWAP}}:=\tr_{12}[\Phi(\proj{0}\otimes \rho_{12} \otimes\proj{0})\Phi^\dag]
\end{equation}
where $\tr_{12}$ represents a partial trace over the Hilbert space of $\mathcal{H}_A\otimes\mathcal{H}_B$. When $\Phi$ is exactly the swap gate defined above via Eq.~\eqref{Eq:UV}, $\rho^{\text{\tiny SWAP}}$ is exactly $\rho_{12}$. Thus, the fidelity between $\rho^{\text{\tiny SWAP}}$ and the reference state $\ket{\tilde{\psi}}$: $F=\langle \tilde{\psi}|\rho^{\text{\tiny SWAP}}|\tilde{\psi}\rangle$ provides a  figure of merit on the similarity between (some relevant parts of) the shared state $\rho_{12}$ and the reference state $\ket{\tilde{\psi}}$.

To perform a device-independent characterization, the assumption of $\tilde{M}_{1|x}^{(1)}$ and $\tilde{M}_{1|y}^{(2)}$ is relaxed to unknown projectors $M_{1|x}^{(1)}$ and $M_{1|y}^{(2)}$ (acting on Hilbert space of arbitrary dimensions), and the corresponding ``CNOT'' gates becomes
\begin{equation}
\begin{aligned}
&U_i = \openone\otimes \proj{0} + (\openone - 2 M_{1|1}^{(i)})\otimes\proj{1},\\
&V_i = (\openone - M_{1|0}^{(i)}) \otimes\openone + M_{1|0}^{(i)}\otimes\sigma_x,
\end{aligned}
\label{EqApp:_CNOTs_DI}
\end{equation}
for $i=1,2$. One can verify that the fidelity $F=\langle \tilde{\psi}|\rho^{\text{\tiny SWAP}}|\tilde{\psi}\rangle$ is then a linear function of the moments such as $\langle M_{a|x}^{(1)}\otimes M_{b|y}^{(2)}\rangle$, $\langle M_{a|x}^{(1)}\otimes M_{b|y}^{(2)}M_{b'|y'}^{(2)}\rangle$ etc.,  where $\langle\cdot\rangle:=\tr(\cdot\rho_{12})$. 

Thus, a lower bound on $F$ for any observed value of Bell inequality violation (without assuming the shared state or the measurements performed) can be obtained by solving the following semidefinite program:
\begin{equation}
\begin{aligned}
\min\quad &F\\
\text{such that} \quad& \Gamma^S\succeq 0,\\
&I_2^{\text{\tiny LO-CHSH}} =  I_{2,\text{\tiny obs}}^{\text{\tiny LO-CHSH}},
\end{aligned}
\label{EqApp:min_fidelity_SDP}
\end{equation}
where $\Gamma^S$ is any Navascu{\'e}s-Pironio-Ac{\'i}n-type~\cite{NPA07} moment matrix that contains all the moments appearing in $F$. In our computation, we employed a moment matrix that is built from a sequence of operators $S$ that contains all operators from level 1+AB (or equivalently, level 1 from the hierarchy of Ref.~\cite{MBL+13}) and some additional operators from level 3. 

Our results (see Fig.~\ref{fig_min_fidelity_LOCHSH}) clearly shows that the self-testing property of $I_2^{\text{\tiny LO-CHSH}}$ with respect to the reference maximally entangled state $\ket{\tilde{\psi}}$ of Eq.~\eqref{Eq:TargetState} is indeed robust. In other words, as long as the observed violation of $I_2^{\text{\tiny LO-CHSH}}$ is greater than $\approx2.4$, one can still obtain a non-trivial lower bound on the fidelity ($>1/2$) with respect to $\ket{\tilde{\psi}}$. Moreover, a separate computation using the original CHSH Bell inequality of Eq.~\eqref{chshineq} (and the same level of approximation of $\Q$) gives---within the numerical precision of the solver---the same curve, thereby suggesting that the outcome-lifted Bell inequality of~\eqref{liftchshineq} offers the same level of robustness as compared with its seed inequality.

\begin{figure}
\includegraphics[width=0.9\linewidth]{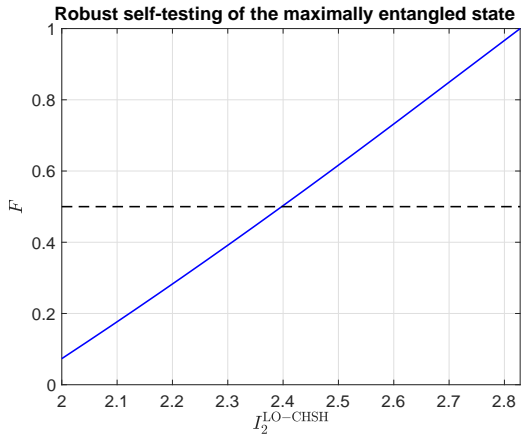}
\caption{Lower bounds on the fidelity as a function of the value of the outcome-lifted CHSH inequality $I_2^{\text{\tiny LO-CHSH}}$. The results are obtained by solving the semidefinite program described in Eq.~\eqref{EqApp:min_fidelity_SDP}.
}
\label{fig_min_fidelity_LOCHSH}
\end{figure}

\subsection{Robust self-testing of Alice's POVM}

Even though it is impossible to completely self-test all local measurements, robust self-testing of Alice's POVM---as one would intuitively expect---can still be achieved. In particular, when the observed violation of $I_2^{\text{\tiny LO-CHSH}}$ is close to the quantum bound of $2\sqrt{2}$, it must be the case that Alice's measurements (on the relevant degrees of freedom) indeed behave like measurements in the $\sigma_z$ and $\sigma_x$ basis, respectively, for $x=0,1$. 

To that end, we again make use the swap method proposed in Ref.~\cite{YVB+14}. The idea is that if these measurements behave as expected, then their measurements on the auxiliary states swapped into the uncharacterized device, i.e., $\Phi_1(|\varphi\rangle)$---with $\ket{\varphi}$ being eigenstates of $\sigma_z$ and $\sigma_x$---should produce outcomes $a$ with statistics  $\{P(a|x,|\varphi\rangle)\}$ satisfying
\begin{equation}
P(0|0,|0\rangle) = P(1|0,|1\rangle) = P(0|1,|+\rangle) = P(1|1,|-\rangle) = 1.
\label{EqApp:Pax_ideal}
\end{equation}

\begin{figure}
\includegraphics[width=0.9\linewidth]{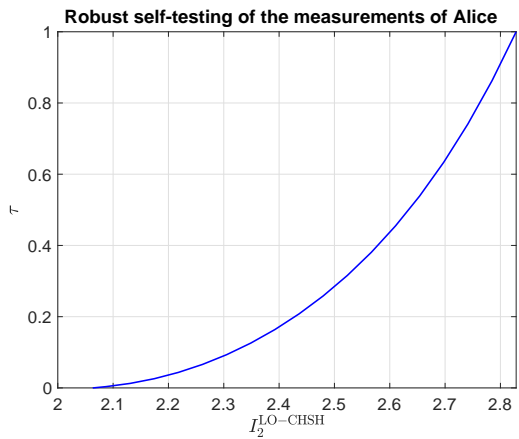}
\caption{Lower bounds on the figure of merit defined in Eq.~\eqref{EqApp:tau} as a function of the value of the outcome-lifted CHSH inequality $I_2^{\text{\tiny LO-CHSH}}$. The bounds are obtained by solving the semidefinite program described in Eq.~\eqref{EqApp:min_tau_SDP}.
}
\label{fig_min_tau_LOCHSH}
\end{figure}

Using the same swap operator defined via Eq.~\eqref{EqApp:_CNOTs_DI}, we get
\begin{equation}\label{Eq:ProbOutcome}
	P(a|x,|\varphi\rangle) = \tr\left\{M_{a|x}^{(1)}\left[\Phi_1\big( \rho_{12}\otimes\proj{\varphi}\big) \Phi_1^\dag\right]\right\},
\end{equation}
where we have, for simplicity, omitted the identity operator  acting on Bob's system. Notice that the left-hand-side of Eq.~\eqref{Eq:ProbOutcome} is again some linear combination of moments. Likewise for the following figure of merit~\cite{YVB+14}:
\begin{equation}
\begin{aligned}
\tau = \frac{1}{2}\big[  &P(0|0,|0\rangle) + P(1|0,|1\rangle)\\
+ &P(0|1,|+\rangle) + P(1|1,|-\rangle)   \big] - 1,
\end{aligned}
\label{EqApp:tau}
\end{equation}
which takes value between $-1$ and $+1$. The maximum of $+1$, in particular, happens {\em only} when Alice's POVM $M^{(1)}_{a|x}$ correspond to measurements in $\sigma_z$ and $\sigma_x$, respectively, for $x=0,1$. $\tau$ therefore quantifies the extent to which the measurement devices function like the reference measurements. Given a violation of $I_2^{\text{\tiny LO-CHSH}}$, a lower bound on $\tau$ can thus be obtained by solving the following semidefinite program:
\begin{equation}
\begin{aligned}
\min\quad &\tau\\
\text{such that} \quad& \Gamma^S\succeq 0,\\
&I_2^{\text{\tiny LO-CHSH}} =  I_{2,\text{\tiny obs}}^{\text{\tiny LO-CHSH}}.
\end{aligned}
\label{EqApp:min_tau_SDP}
\end{equation}
The resulting lower bounds on $\tau$ are shown in Figure~\ref{fig_min_tau_LOCHSH}. We see that the value of $1$ is obtained when the maximal quantum value of $I_2^{\text{\tiny LO-CHSH}}$ is observed, which means that the reference measurements of Eq.~\eqref{Eq:POVMA} are correctly certified in this case. For nonmaximal values of $I_2^{\text{\tiny LO-CHSH}}$, we see that $\tau$ decreases accordingly. Importantly, as pointed out in Ref.~\cite{YVB+14}, the procedure of sending the prepared eigenstates into the swap gate is a virtual process that allows us to interpret the figure of merit operationally, but the result still holds without any assumption on the devices of interest.

\subsection{Partial but robust self-testing of Bob's POVMs}

Finally, we would like to show that the outcome-lifted CHSH inequality of Eq.~\eqref{liftchshineq} can also be used for a ``partial'' self-testing of Bob's optimal measurements. The steps are the same as those described in the self-testing of Alice's measurements. That is, the eigenstates of $\sigma_z$ and $\sigma_x$ are sent to the swap gate before Bob performs his  measurements $\{M_{b|y}^{(2)}\}$. 

To this end, we define the analog of Eq.~\eqref{EqApp:tau} as:
\begin{equation}
\begin{aligned}
\tau_3 = \frac{1}{2}\big[  &P(0|0,|0\rangle) + P(1|0,|1\rangle) + P(2|0,|0\rangle)\\
+ &P(0|1,|+\rangle) + P(1|1,|-\rangle) + P(2|1,|+\rangle)  \big] - 1,
\end{aligned}
\label{EqApp:tau3}
\end{equation}
and introduce a further figure of merit
\begin{equation}
\tau_1 =  P(1|0,|0\rangle) + P(1|1,|+\rangle) -1
\label{EqApp:tau1}
\end{equation}
to self-test only the POVM element corresponding to Bob's outcome 1 for both measurements.

Thus, $\tau_3$ takes into account of Bob's all measurements outcomes while $\tau_1$ only involves the second measurement outcome. All these figures of merit range from $-1$ to $+1$, and $+1$ is recovered for 
\begin{itemize}
\item $\tau_3$ if Bob's measurement device acts on the swapped eigenstate according to Eq.~\eqref{Eq:POVMB};
\item $\tau_1$ if Bob's measurement device acts on the swapped eigenstate in such a way that the 2nd POVM element for each measurement functions according to Eq.~\eqref{Eq:POVMB}
\end{itemize}
In other words, the value of $\tau_1$ measures the extent to which $M_{1|y}^{(2)}$ behaves according to that prescribed in Eq.~\eqref{Eq:POVMB}, while the value of $\tau_3$ further indicates if the combined effect of $M_{0|y}^{(2)}+M_{2|y}^{(2)}$ 
also behaves according to that prescribed in Eq.~\eqref{Eq:POVMB}.

By solving the semidefinite program of Eq.~\eqref{EqApp:min_tau_SDP} using the appropriate objective functions, we obtained lower bounds on each figure of merit as a function of the quantum violation of $I_2^{\text{\tiny LO-CHSH}}$. As shown in Fig.~\ref{fig_min_tau_Bob_LOCHSH}, the bounds on $\tau_3$ and $\tau_1$ when $I_2^{\text{\tiny LO-CHSH}}$ takes its maximal value successfully self-tests, respectively,  the combined effect of $M_{0|y}^{(2)}+M_{2|y}^{(2)}$ as well as that of $M_{1|y}^{(2)}$. In summary, for the outcome-lifted CHSH inequality of Eq.~\eqref{liftchshineq}, where the first outcome is lifted in each of Bob's measurement, it is still possible to self-test Alice's optimal measurements and the overall behavior of Bob's measurements.

\begin{figure}
\includegraphics[width=0.9\linewidth]{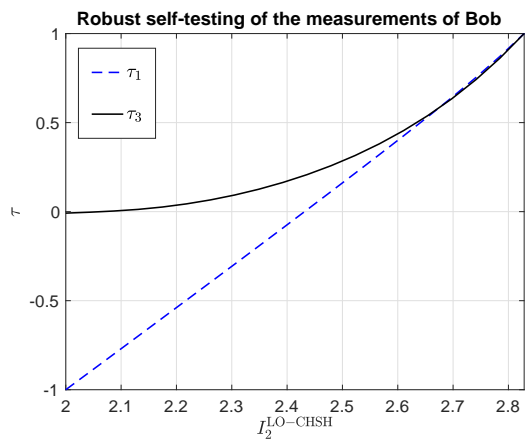}
\caption{
Lower bounds on the figures of merit defined in Eqs.~\eqref{EqApp:tau3}-\eqref{EqApp:tau1}  as a function of the value of the outcome-lifted CHSH inequality $I_2^{\text{\tiny LO-CHSH}}$. The bounds are obtained by solving the semidefinite program described in Eq.~\eqref{EqApp:min_tau_SDP} with the appropriate figure of merit. The two figures of merit, as explained in the text, reflect different aspects of the self-testability of Bob's measurements.
}
\label{fig_min_tau_Bob_LOCHSH}
\end{figure}

\bibliography{JHL} 

\begin{thebibliography}{63}%
\makeatletter
\providecommand \@ifxundefined [1]{%
 \@ifx{#1\undefined}
}%
\providecommand \@ifnum [1]{%
 \ifnum #1\expandafter \@firstoftwo
 \else \expandafter \@secondoftwo
 \fi
}%
\providecommand \@ifx [1]{%
 \ifx #1\expandafter \@firstoftwo
 \else \expandafter \@secondoftwo
 \fi
}%
\providecommand \natexlab [1]{#1}%
\providecommand \enquote  [1]{``#1''}%
\providecommand \bibnamefont  [1]{#1}%
\providecommand \bibfnamefont [1]{#1}%
\providecommand \citenamefont [1]{#1}%
\providecommand \href@noop [0]{\@secondoftwo}%
\providecommand \href [0]{\begingroup \@sanitize@url \@href}%
\providecommand \@href[1]{\@@startlink{#1}\@@href}%
\providecommand \@@href[1]{\endgroup#1\@@endlink}%
\providecommand \@sanitize@url [0]{\catcode `\\12\catcode `\$12\catcode
  `\&12\catcode `\#12\catcode `\^12\catcode `\_12\catcode `\%12\relax}%
\providecommand \@@startlink[1]{}%
\providecommand \@@endlink[0]{}%
\providecommand \url  [0]{\begingroup\@sanitize@url \@url }%
\providecommand \@url [1]{\endgroup\@href {#1}{\urlprefix }}%
\providecommand \urlprefix  [0]{URL }%
\providecommand \Eprint [0]{\href }%
\providecommand \doibase [0]{http://dx.doi.org/}%
\providecommand \selectlanguage [0]{\@gobble}%
\providecommand \bibinfo  [0]{\@secondoftwo}%
\providecommand \bibfield  [0]{\@secondoftwo}%
\providecommand \translation [1]{[#1]}%
\providecommand \BibitemOpen [0]{}%
\providecommand \bibitemStop [0]{}%
\providecommand \bibitemNoStop [0]{.\EOS\space}%
\providecommand \EOS [0]{\spacefactor3000\relax}%
\providecommand \BibitemShut  [1]{\csname bibitem#1\endcsname}%
\let\auto@bib@innerbib\@empty
\bibitem [{\citenamefont {Einstein}\ \emph {et~al.}(1935)\citenamefont
  {Einstein}, \citenamefont {Podolsky},\ and\ \citenamefont {Rosen}}]{EPR35}%
  \BibitemOpen
  \bibfield  {author} {\bibinfo {author} {\bibfnamefont {A.}~\bibnamefont
  {Einstein}}, \bibinfo {author} {\bibfnamefont {B.}~\bibnamefont {Podolsky}},
  \ and\ \bibinfo {author} {\bibfnamefont {N.}~\bibnamefont {Rosen}},\ }\href
  {\doibase 10.1103/PhysRev.47.777} {\bibfield  {journal} {\bibinfo  {journal}
  {Phys. Rev.}\ }\textbf {\bibinfo {volume} {47}},\ \bibinfo {pages} {777}
  (\bibinfo {year} {1935})}\BibitemShut {NoStop}%
\bibitem [{\citenamefont {Bell}(1964)}]{Bell64}%
  \BibitemOpen
  \bibfield  {author} {\bibinfo {author} {\bibfnamefont {J.~S.}\ \bibnamefont
  {Bell}},\ }\href {\doibase 10.1103/PhysicsPhysiqueFizika.1.195} {\bibfield
  {journal} {\bibinfo  {journal} {Physics Physique Fizika}\ }\textbf {\bibinfo
  {volume} {1}},\ \bibinfo {pages} {195} (\bibinfo {year} {1964})}\BibitemShut
  {NoStop}%
\bibitem [{\citenamefont {Clauser}\ \emph {et~al.}(1969)\citenamefont
  {Clauser}, \citenamefont {Horne}, \citenamefont {Shimony},\ and\
  \citenamefont {Holt}}]{Clauser69}%
  \BibitemOpen
  \bibfield  {author} {\bibinfo {author} {\bibfnamefont {J.~F.}\ \bibnamefont
  {Clauser}}, \bibinfo {author} {\bibfnamefont {M.~A.}\ \bibnamefont {Horne}},
  \bibinfo {author} {\bibfnamefont {A.}~\bibnamefont {Shimony}}, \ and\
  \bibinfo {author} {\bibfnamefont {R.~A.}\ \bibnamefont {Holt}},\ }\href
  {\doibase 10.1103/PhysRevLett.23.880} {\bibfield  {journal} {\bibinfo
  {journal} {Phys. Rev. Lett.}\ }\textbf {\bibinfo {volume} {23}},\ \bibinfo
  {pages} {880} (\bibinfo {year} {1969})}\BibitemShut {NoStop}%
\bibitem [{\citenamefont {Brunner}\ \emph {et~al.}(2014)\citenamefont
  {Brunner}, \citenamefont {Cavalcanti}, \citenamefont {Pironio}, \citenamefont
  {Scarani},\ and\ \citenamefont {Wehner}}]{Brunner14}%
  \BibitemOpen
  \bibfield  {author} {\bibinfo {author} {\bibfnamefont {N.}~\bibnamefont
  {Brunner}}, \bibinfo {author} {\bibfnamefont {D.}~\bibnamefont {Cavalcanti}},
  \bibinfo {author} {\bibfnamefont {S.}~\bibnamefont {Pironio}}, \bibinfo
  {author} {\bibfnamefont {V.}~\bibnamefont {Scarani}}, \ and\ \bibinfo
  {author} {\bibfnamefont {S.}~\bibnamefont {Wehner}},\ }\href {\doibase
  10.1103/RevModPhys.86.419} {\bibfield  {journal} {\bibinfo  {journal} {Rev.
  Mod. Phys.}\ }\textbf {\bibinfo {volume} {86}},\ \bibinfo {pages} {419}
  (\bibinfo {year} {2014})}\BibitemShut {NoStop}%
\bibitem [{\citenamefont {Ekert}(1991)}]{Eke91}%
  \BibitemOpen
  \bibfield  {author} {\bibinfo {author} {\bibfnamefont {A.~K.}\ \bibnamefont
  {Ekert}},\ }\href {\doibase 10.1103/PhysRevLett.67.661} {\bibfield  {journal}
  {\bibinfo  {journal} {Phys. Rev. Lett.}\ }\textbf {\bibinfo {volume} {67}},\
  \bibinfo {pages} {661} (\bibinfo {year} {1991})}\BibitemShut {NoStop}%
\bibitem [{\citenamefont {Mayers}\ and\ \citenamefont
  {Yao}(1998)}]{Mayers:1998aa}%
  \BibitemOpen
  \bibfield  {author} {\bibinfo {author} {\bibfnamefont {D.}~\bibnamefont
  {Mayers}}\ and\ \bibinfo {author} {\bibfnamefont {A.}~\bibnamefont {Yao}},\
  }in\ \href@noop {} {\emph {\bibinfo {booktitle} {Proceedings 39th Annual
  Symposium on Foundations of Computer Science (Cat. No.98CB36280)}}}\
  (\bibinfo {year} {1998})\ pp.\ \bibinfo {pages} {503--509}\BibitemShut
  {NoStop}%
\bibitem [{\citenamefont {Mayers}\ and\ \citenamefont {Yao}(2004)}]{MY04}%
  \BibitemOpen
  \bibfield  {author} {\bibinfo {author} {\bibfnamefont {D.}~\bibnamefont
  {Mayers}}\ and\ \bibinfo {author} {\bibfnamefont {A.}~\bibnamefont {Yao}},\
  }\href@noop {} {\bibfield  {journal} {\bibinfo  {journal} {Quantum Inf.
  Comput.}\ }\textbf {\bibinfo {volume} {4}},\ \bibinfo {pages} {273} (\bibinfo
  {year} {2004})}\BibitemShut {NoStop}%
\bibitem [{\citenamefont {Scarani}(2012)}]{Scarani12}%
  \BibitemOpen
  \bibfield  {author} {\bibinfo {author} {\bibfnamefont {V.}~\bibnamefont
  {Scarani}},\ }\href@noop {} {\bibfield  {journal} {\bibinfo  {journal} {Acta
  Phys. Slovaca}\ }\textbf {\bibinfo {volume} {62}},\ \bibinfo {pages} {347}
  (\bibinfo {year} {2012})}\BibitemShut {NoStop}%
\bibitem [{\citenamefont {Popescu}\ and\ \citenamefont
  {Rohrlich}(1992)}]{PR92}%
  \BibitemOpen
  \bibfield  {author} {\bibinfo {author} {\bibfnamefont {S.}~\bibnamefont
  {Popescu}}\ and\ \bibinfo {author} {\bibfnamefont {D.}~\bibnamefont
  {Rohrlich}},\ }\href {\doibase https://doi.org/10.1016/0375-9601(92)90819-8}
  {\bibfield  {journal} {\bibinfo  {journal} {Phys. Lett. A}\ }\textbf
  {\bibinfo {volume} {169}},\ \bibinfo {pages} {411 } (\bibinfo {year}
  {1992})}\BibitemShut {NoStop}%
\bibitem [{\citenamefont {Braunstein}\ and\ \citenamefont
  {Caves}(1990)}]{Braunstein1990}%
  \BibitemOpen
  \bibfield  {author} {\bibinfo {author} {\bibfnamefont {S.~L.}\ \bibnamefont
  {Braunstein}}\ and\ \bibinfo {author} {\bibfnamefont {C.~M.}\ \bibnamefont
  {Caves}},\ }\href {\doibase https://doi.org/10.1016/0003-4916(90)90339-P}
  {\bibfield  {journal} {\bibinfo  {journal} {Ann. Phys.}\ }\textbf {\bibinfo
  {volume} {202}},\ \bibinfo {pages} {22 } (\bibinfo {year}
  {1990})}\BibitemShut {NoStop}%
\bibitem [{\citenamefont {Werner}\ and\ \citenamefont {Wolf}(2001)}]{WW01}%
  \BibitemOpen
  \bibfield  {author} {\bibinfo {author} {\bibfnamefont {R.~F.}\ \bibnamefont
  {Werner}}\ and\ \bibinfo {author} {\bibfnamefont {M.~M.}\ \bibnamefont
  {Wolf}},\ }\href {\doibase 10.1103/PhysRevA.64.032112} {\bibfield  {journal}
  {\bibinfo  {journal} {Phys. Rev. A}\ }\textbf {\bibinfo {volume} {64}},\
  \bibinfo {pages} {032112} (\bibinfo {year} {2001})}\BibitemShut {NoStop}%
\bibitem [{\citenamefont {Collins}\ \emph {et~al.}(2002)\citenamefont
  {Collins}, \citenamefont {Gisin}, \citenamefont {Linden}, \citenamefont
  {Massar},\ and\ \citenamefont {Popescu}}]{Collins2002}%
  \BibitemOpen
  \bibfield  {author} {\bibinfo {author} {\bibfnamefont {D.}~\bibnamefont
  {Collins}}, \bibinfo {author} {\bibfnamefont {N.}~\bibnamefont {Gisin}},
  \bibinfo {author} {\bibfnamefont {N.}~\bibnamefont {Linden}}, \bibinfo
  {author} {\bibfnamefont {S.}~\bibnamefont {Massar}}, \ and\ \bibinfo {author}
  {\bibfnamefont {S.}~\bibnamefont {Popescu}},\ }\href {\doibase
  10.1103/PhysRevLett.88.040404} {\bibfield  {journal} {\bibinfo  {journal}
  {Phys. Rev. Lett.}\ }\textbf {\bibinfo {volume} {88}},\ \bibinfo {pages}
  {040404} (\bibinfo {year} {2002})}\BibitemShut {NoStop}%
\bibitem [{\citenamefont {Kaszlikowski}\ \emph {et~al.}(2002)\citenamefont
  {Kaszlikowski}, \citenamefont {Kwek}, \citenamefont {Chen}, \citenamefont
  {\.{Z}ukowski},\ and\ \citenamefont {Oh}}]{Kaszlikowski2002}%
  \BibitemOpen
  \bibfield  {author} {\bibinfo {author} {\bibfnamefont {D.}~\bibnamefont
  {Kaszlikowski}}, \bibinfo {author} {\bibfnamefont {L.~C.}\ \bibnamefont
  {Kwek}}, \bibinfo {author} {\bibfnamefont {J.-L.}\ \bibnamefont {Chen}},
  \bibinfo {author} {\bibfnamefont {M.}~\bibnamefont {\.{Z}ukowski}}, \ and\
  \bibinfo {author} {\bibfnamefont {C.~H.}\ \bibnamefont {Oh}},\ }\href
  {\doibase 10.1103/PhysRevA.65.032118} {\bibfield  {journal} {\bibinfo
  {journal} {Phys. Rev. A}\ }\textbf {\bibinfo {volume} {65}},\ \bibinfo
  {pages} {032118} (\bibinfo {year} {2002})}\BibitemShut {NoStop}%
\bibitem [{\citenamefont {\.{Z}ukowski}\ and\ \citenamefont
  {Brukner}(2002)}]{Zukowski2002}%
  \BibitemOpen
  \bibfield  {author} {\bibinfo {author} {\bibfnamefont {M.}~\bibnamefont
  {\.{Z}ukowski}}\ and\ \bibinfo {author} {\bibfnamefont
  {{\v{C}}.}~\bibnamefont {Brukner}},\ }\href {\doibase
  10.1103/PhysRevLett.88.210401} {\bibfield  {journal} {\bibinfo  {journal}
  {Phys. Rev. Lett.}\ }\textbf {\bibinfo {volume} {88}},\ \bibinfo {pages}
  {210401} (\bibinfo {year} {2002})}\BibitemShut {NoStop}%
\bibitem [{\citenamefont {{\'S}liwa}(2003)}]{Sliwa2003}%
  \BibitemOpen
  \bibfield  {author} {\bibinfo {author} {\bibfnamefont {C.}~\bibnamefont
  {{\'S}liwa}},\ }\href {\doibase
  https://doi.org/10.1016/S0375-9601(03)01115-0} {\bibfield  {journal}
  {\bibinfo  {journal} {Phys. Lett. A}\ }\textbf {\bibinfo {volume} {317}},\
  \bibinfo {pages} {165 } (\bibinfo {year} {2003})}\BibitemShut {NoStop}%
\bibitem [{\citenamefont {Collins}\ and\ \citenamefont
  {Gisin}(2004)}]{Collins04}%
  \BibitemOpen
  \bibfield  {author} {\bibinfo {author} {\bibfnamefont {D.}~\bibnamefont
  {Collins}}\ and\ \bibinfo {author} {\bibfnamefont {N.}~\bibnamefont
  {Gisin}},\ }\href@noop {} {\bibfield  {journal} {\bibinfo  {journal} {J.
  Phys. A: Math. Theo.}\ }\textbf {\bibinfo {volume} {37}},\ \bibinfo {pages}
  {1775} (\bibinfo {year} {2004})}\BibitemShut {NoStop}%
\bibitem [{\citenamefont {Buhrman}\ and\ \citenamefont
  {Massar}(2005)}]{Buhrman2005}%
  \BibitemOpen
  \bibfield  {author} {\bibinfo {author} {\bibfnamefont {H.}~\bibnamefont
  {Buhrman}}\ and\ \bibinfo {author} {\bibfnamefont {S.}~\bibnamefont
  {Massar}},\ }\href {\doibase 10.1103/PhysRevA.72.052103} {\bibfield
  {journal} {\bibinfo  {journal} {Phys. Rev. A}\ }\textbf {\bibinfo {volume}
  {72}},\ \bibinfo {pages} {052103} (\bibinfo {year} {2005})}\BibitemShut
  {NoStop}%
\bibitem [{\citenamefont {G\"uhne}\ \emph {et~al.}(2005)\citenamefont
  {G\"uhne}, \citenamefont {T\'oth}, \citenamefont {Hyllus},\ and\
  \citenamefont {Briegel}}]{Guhne2005}%
  \BibitemOpen
  \bibfield  {author} {\bibinfo {author} {\bibfnamefont {O.}~\bibnamefont
  {G\"uhne}}, \bibinfo {author} {\bibfnamefont {G.}~\bibnamefont {T\'oth}},
  \bibinfo {author} {\bibfnamefont {P.}~\bibnamefont {Hyllus}}, \ and\ \bibinfo
  {author} {\bibfnamefont {H.~J.}\ \bibnamefont {Briegel}},\ }\href {\doibase
  10.1103/PhysRevLett.95.120405} {\bibfield  {journal} {\bibinfo  {journal}
  {Phys. Rev. Lett.}\ }\textbf {\bibinfo {volume} {95}},\ \bibinfo {pages}
  {120405} (\bibinfo {year} {2005})}\BibitemShut {NoStop}%
\bibitem [{\citenamefont {Avis}\ \emph {et~al.}(2006)\citenamefont {Avis},
  \citenamefont {Imai},\ and\ \citenamefont {Ito}}]{Avis2006}%
  \BibitemOpen
  \bibfield  {author} {\bibinfo {author} {\bibfnamefont {D.}~\bibnamefont
  {Avis}}, \bibinfo {author} {\bibfnamefont {H.}~\bibnamefont {Imai}}, \ and\
  \bibinfo {author} {\bibfnamefont {T.}~\bibnamefont {Ito}},\ }\href {\doibase
  10.1088/0305-4470/39/36/010} {\bibfield  {journal} {\bibinfo  {journal} {J.
  Phys. A: Math. Gen.}\ }\textbf {\bibinfo {volume} {39}},\ \bibinfo {pages}
  {11283} (\bibinfo {year} {2006})}\BibitemShut {NoStop}%
\bibitem [{\citenamefont {Liang}\ \emph {et~al.}(2009)\citenamefont {Liang},
  \citenamefont {Lim},\ and\ \citenamefont {Deng}}]{Liang2009}%
  \BibitemOpen
  \bibfield  {author} {\bibinfo {author} {\bibfnamefont {Y.-C.}\ \bibnamefont
  {Liang}}, \bibinfo {author} {\bibfnamefont {C.-W.}\ \bibnamefont {Lim}}, \
  and\ \bibinfo {author} {\bibfnamefont {D.-L.}\ \bibnamefont {Deng}},\ }\href
  {\doibase 10.1103/PhysRevA.80.052116} {\bibfield  {journal} {\bibinfo
  {journal} {Phys. Rev. A}\ }\textbf {\bibinfo {volume} {80}},\ \bibinfo
  {pages} {052116} (\bibinfo {year} {2009})}\BibitemShut {NoStop}%
\bibitem [{\citenamefont {Gisin}\ \emph {et~al.}(2009)\citenamefont {Gisin},
  \citenamefont {Myrvold},\ and\ \citenamefont {Christian}}]{Gisin:2009aa}%
  \BibitemOpen
  \bibfield  {author} {\bibinfo {author} {\bibfnamefont {N.}~\bibnamefont
  {Gisin}}, \bibinfo {author} {\bibfnamefont {W.~C.}\ \bibnamefont {Myrvold}},
  \ and\ \bibinfo {author} {\bibfnamefont {J.}~\bibnamefont {Christian}},\
  }\enquote {\bibinfo {title} {Bell inequalities: Many questions, a few
  answers},}\ in\ \href {\doibase 10.1007/978-1-4020-9107-0{\_}9} {\emph
  {\bibinfo {booktitle} {Quantum Reality, Relativistic Causality, and Closing
  the Epistemic Circle: Essays in Honour of Abner Shimony}}}\ (\bibinfo
  {publisher} {Springer Netherlands},\ \bibinfo {address} {Dordrecht},\
  \bibinfo {year} {2009})\ pp.\ \bibinfo {pages} {125--138}\BibitemShut
  {NoStop}%
\bibitem [{\citenamefont {Ac\'{\i}n}\ \emph {et~al.}(2012)\citenamefont
  {Ac\'{\i}n}, \citenamefont {Massar},\ and\ \citenamefont
  {Pironio}}]{Acin2012}%
  \BibitemOpen
  \bibfield  {author} {\bibinfo {author} {\bibfnamefont {A.}~\bibnamefont
  {Ac\'{\i}n}}, \bibinfo {author} {\bibfnamefont {S.}~\bibnamefont {Massar}}, \
  and\ \bibinfo {author} {\bibfnamefont {S.}~\bibnamefont {Pironio}},\ }\href
  {\doibase 10.1103/PhysRevLett.108.100402} {\bibfield  {journal} {\bibinfo
  {journal} {Phys. Rev. Lett.}\ }\textbf {\bibinfo {volume} {108}},\ \bibinfo
  {pages} {100402} (\bibinfo {year} {2012})}\BibitemShut {NoStop}%
\bibitem [{\citenamefont {Bancal}\ \emph
  {et~al.}(2012{\natexlab{a}})\citenamefont {Bancal}, \citenamefont
  {Branciard}, \citenamefont {Brunner}, \citenamefont {Gisin},\ and\
  \citenamefont {Liang}}]{Bancal12}%
  \BibitemOpen
  \bibfield  {author} {\bibinfo {author} {\bibfnamefont {J.-D.}\ \bibnamefont
  {Bancal}}, \bibinfo {author} {\bibfnamefont {C.}~\bibnamefont {Branciard}},
  \bibinfo {author} {\bibfnamefont {N.}~\bibnamefont {Brunner}}, \bibinfo
  {author} {\bibfnamefont {N.}~\bibnamefont {Gisin}}, \ and\ \bibinfo {author}
  {\bibfnamefont {Y.-C.}\ \bibnamefont {Liang}},\ }\href
  {http://stacks.iop.org/1751-8121/45/i=12/a=125301} {\bibfield  {journal}
  {\bibinfo  {journal} {J. Phys. A: Math. Theo.}\ }\textbf {\bibinfo {volume}
  {45}},\ \bibinfo {pages} {125301} (\bibinfo {year}
  {2012}{\natexlab{a}})}\BibitemShut {NoStop}%
\bibitem [{\citenamefont {Grandjean}\ \emph {et~al.}(2012)\citenamefont
  {Grandjean}, \citenamefont {Liang}, \citenamefont {Bancal}, \citenamefont
  {Brunner},\ and\ \citenamefont {Gisin}}]{Grandjean2012}%
  \BibitemOpen
  \bibfield  {author} {\bibinfo {author} {\bibfnamefont {B.}~\bibnamefont
  {Grandjean}}, \bibinfo {author} {\bibfnamefont {Y.-C.}\ \bibnamefont
  {Liang}}, \bibinfo {author} {\bibfnamefont {J.-D.}\ \bibnamefont {Bancal}},
  \bibinfo {author} {\bibfnamefont {N.}~\bibnamefont {Brunner}}, \ and\
  \bibinfo {author} {\bibfnamefont {N.}~\bibnamefont {Gisin}},\ }\href
  {\doibase 10.1103/PhysRevA.85.052113} {\bibfield  {journal} {\bibinfo
  {journal} {Phys. Rev. A}\ }\textbf {\bibinfo {volume} {85}},\ \bibinfo
  {pages} {052113} (\bibinfo {year} {2012})}\BibitemShut {NoStop}%
\bibitem [{\citenamefont {Man{\v c}inska}(2014)}]{Mancinska:2014aa}%
  \BibitemOpen
  \bibfield  {author} {\bibinfo {author} {\bibfnamefont {L.}~\bibnamefont
  {Man{\v c}inska}},\ }\enquote {\bibinfo {title} {Maximally entangled state in
  pseudo-telepathy games},}\ in\ \href {\doibase
  10.1007/978-3-319-13350-8{\_}15} {\emph {\bibinfo {booktitle} {Computing with
  New Resources: Essays Dedicated to Jozef Gruska on the Occasion of His 80th
  Birthday}}},\ \bibinfo {editor} {edited by\ \bibinfo {editor} {\bibfnamefont
  {C.~S.}\ \bibnamefont {Calude}}, \bibinfo {editor} {\bibfnamefont
  {R.}~\bibnamefont {Freivalds}}, \ and\ \bibinfo {editor} {\bibfnamefont
  {I.}~\bibnamefont {Kazuo}}}\ (\bibinfo  {publisher} {Springer International
  Publishing},\ \bibinfo {address} {Cham},\ \bibinfo {year} {2014})\ pp.\
  \bibinfo {pages} {200--207}\BibitemShut {NoStop}%
\bibitem [{\citenamefont {Liang}\ \emph {et~al.}(2015)\citenamefont {Liang},
  \citenamefont {Rosset}, \citenamefont {Bancal}, \citenamefont {P\"utz},
  \citenamefont {Barnea},\ and\ \citenamefont {Gisin}}]{Liang:PRL:2015}%
  \BibitemOpen
  \bibfield  {author} {\bibinfo {author} {\bibfnamefont {Y.-C.}\ \bibnamefont
  {Liang}}, \bibinfo {author} {\bibfnamefont {D.}~\bibnamefont {Rosset}},
  \bibinfo {author} {\bibfnamefont {J.-D.}\ \bibnamefont {Bancal}}, \bibinfo
  {author} {\bibfnamefont {G.}~\bibnamefont {P\"utz}}, \bibinfo {author}
  {\bibfnamefont {T.~J.}\ \bibnamefont {Barnea}}, \ and\ \bibinfo {author}
  {\bibfnamefont {N.}~\bibnamefont {Gisin}},\ }\href {\doibase
  10.1103/PhysRevLett.114.190401} {\bibfield  {journal} {\bibinfo  {journal}
  {Phys. Rev. Lett.}\ }\textbf {\bibinfo {volume} {114}},\ \bibinfo {pages}
  {190401} (\bibinfo {year} {2015})}\BibitemShut {NoStop}%
\bibitem [{\citenamefont {Schwarz}\ \emph {et~al.}(2016)\citenamefont
  {Schwarz}, \citenamefont {Bessire}, \citenamefont {Stefanov},\ and\
  \citenamefont {Liang}}]{Schwarz16}%
  \BibitemOpen
  \bibfield  {author} {\bibinfo {author} {\bibfnamefont {S.}~\bibnamefont
  {Schwarz}}, \bibinfo {author} {\bibfnamefont {B.}~\bibnamefont {Bessire}},
  \bibinfo {author} {\bibfnamefont {A.}~\bibnamefont {Stefanov}}, \ and\
  \bibinfo {author} {\bibfnamefont {Y.-C.}\ \bibnamefont {Liang}},\ }\href
  {http://stacks.iop.org/1367-2630/18/i=3/a=035001} {\bibfield  {journal}
  {\bibinfo  {journal} {New J Phys.}\ }\textbf {\bibinfo {volume} {18}},\
  \bibinfo {pages} {035001} (\bibinfo {year} {2016})}\BibitemShut {NoStop}%
\bibitem [{\citenamefont {Salavrakos}\ \emph {et~al.}(2017)\citenamefont
  {Salavrakos}, \citenamefont {Augusiak}, \citenamefont {Tura}, \citenamefont
  {Wittek}, \citenamefont {Ac\'{\i}n},\ and\ \citenamefont {Pironio}}]{SAT+17}%
  \BibitemOpen
  \bibfield  {author} {\bibinfo {author} {\bibfnamefont {A.}~\bibnamefont
  {Salavrakos}}, \bibinfo {author} {\bibfnamefont {R.}~\bibnamefont
  {Augusiak}}, \bibinfo {author} {\bibfnamefont {J.}~\bibnamefont {Tura}},
  \bibinfo {author} {\bibfnamefont {P.}~\bibnamefont {Wittek}}, \bibinfo
  {author} {\bibfnamefont {A.}~\bibnamefont {Ac\'{\i}n}}, \ and\ \bibinfo
  {author} {\bibfnamefont {S.}~\bibnamefont {Pironio}},\ }\href {\doibase
  10.1103/PhysRevLett.119.040402} {\bibfield  {journal} {\bibinfo  {journal}
  {Phys. Rev. Lett.}\ }\textbf {\bibinfo {volume} {119}},\ \bibinfo {pages}
  {040402} (\bibinfo {year} {2017})}\BibitemShut {NoStop}%
\bibitem [{\citenamefont {Oudot}\ \emph {et~al.}()\citenamefont {Oudot},
  \citenamefont {Bancal}, \citenamefont {Sekatski},\ and\ \citenamefont
  {Sangouard}}]{Enky1810.05636}%
  \BibitemOpen
  \bibfield  {author} {\bibinfo {author} {\bibfnamefont {E.}~\bibnamefont
  {Oudot}}, \bibinfo {author} {\bibfnamefont {J.-D.}\ \bibnamefont {Bancal}},
  \bibinfo {author} {\bibfnamefont {P.}~\bibnamefont {Sekatski}}, \ and\
  \bibinfo {author} {\bibfnamefont {N.}~\bibnamefont {Sangouard}},\ }\href@noop
  {} {\ }\Eprint {http://arxiv.org/abs/1810.05636} {arXiv:1810.05636}
  \BibitemShut {NoStop}%
\bibitem [{\citenamefont {Cope}\ and\ \citenamefont {Colbeck}()}]{Cope18}%
  \BibitemOpen
  \bibfield  {author} {\bibinfo {author} {\bibfnamefont {T.}~\bibnamefont
  {Cope}}\ and\ \bibinfo {author} {\bibfnamefont {R.}~\bibnamefont {Colbeck}},\
  }\href@noop {} {\ }\Eprint {http://arxiv.org/abs/1812.10017}
  {arXiv:1812.10017} \BibitemShut {NoStop}%
\bibitem [{\citenamefont {Bacarri}\ \emph {et~al.}()\citenamefont {Bacarri},
  \citenamefont {Augusiak}, \citenamefont {\v{S}upi\'{c}}, \citenamefont
  {Tura},\ and\ \citenamefont {Ac\'{\i}n}}]{BAS19}%
  \BibitemOpen
  \bibfield  {author} {\bibinfo {author} {\bibfnamefont {F.}~\bibnamefont
  {Bacarri}}, \bibinfo {author} {\bibfnamefont {R.}~\bibnamefont {Augusiak}},
  \bibinfo {author} {\bibfnamefont {I.}~\bibnamefont {\v{S}upi\'{c}}}, \bibinfo
  {author} {\bibfnamefont {J.}~\bibnamefont {Tura}}, \ and\ \bibinfo {author}
  {\bibfnamefont {A.}~\bibnamefont {Ac\'{\i}n}},\ }\href@noop {} {\ }\Eprint
  {http://arxiv.org/abs/1812.10428} {arXiv:1812.10428} \BibitemShut {NoStop}%
\bibitem [{\citenamefont {Cruzeiro}\ and\ \citenamefont
  {Gisin}(2019)}]{Zambrini19}%
  \BibitemOpen
  \bibfield  {author} {\bibinfo {author} {\bibfnamefont {E.~Z.}\ \bibnamefont
  {Cruzeiro}}\ and\ \bibinfo {author} {\bibfnamefont {N.}~\bibnamefont
  {Gisin}},\ }\href {\doibase 10.1103/PhysRevA.99.022104} {\bibfield  {journal}
  {\bibinfo  {journal} {Phys. Rev. A}\ }\textbf {\bibinfo {volume} {99}},\
  \bibinfo {pages} {022104} (\bibinfo {year} {2019})}\BibitemShut {NoStop}%
\bibitem [{\citenamefont {Miller}\ and\ \citenamefont
  {Shi}(2013)}]{Miller2013}%
  \BibitemOpen
  \bibfield  {author} {\bibinfo {author} {\bibfnamefont {C.~A.}\ \bibnamefont
  {Miller}}\ and\ \bibinfo {author} {\bibfnamefont {Y.}~\bibnamefont {Shi}},\
  }in\ \href {\doibase 10.4230/LIPIcs.TQC.2013.254} {\emph {\bibinfo
  {booktitle} {8th Conference on the Theory of Quantum Computation,
  Communication and Cryptography (TQC 2013)}}},\ \bibinfo {series} {Leibniz
  International Proceedings in Informatics (LIPIcs)}, Vol.~\bibinfo {volume}
  {22},\ \bibinfo {editor} {edited by\ \bibinfo {editor} {\bibfnamefont
  {S.}~\bibnamefont {Severini}}\ and\ \bibinfo {editor} {\bibfnamefont
  {F.}~\bibnamefont {Brandao}}}\ (\bibinfo  {publisher} {Schloss
  Dagstuhl--Leibniz-Zentrum fuer Informatik},\ \bibinfo {address} {Dagstuhl,
  Germany},\ \bibinfo {year} {2013})\ pp.\ \bibinfo {pages}
  {254--262}\BibitemShut {NoStop}%
\bibitem [{\citenamefont {Yang}\ and\ \citenamefont
  {Navascu\'es}(2013)}]{YN13}%
  \BibitemOpen
  \bibfield  {author} {\bibinfo {author} {\bibfnamefont {T.~H.}\ \bibnamefont
  {Yang}}\ and\ \bibinfo {author} {\bibfnamefont {M.}~\bibnamefont
  {Navascu\'es}},\ }\href@noop {} {\bibfield  {journal} {\bibinfo  {journal}
  {Phys. Rev. A}\ }\textbf {\bibinfo {volume} {87}},\ \bibinfo {pages}
  {050102(R)} (\bibinfo {year} {2013})}\BibitemShut {NoStop}%
\bibitem [{\citenamefont {Bamps}\ and\ \citenamefont
  {Pironio}(2015)}]{Bamps2015}%
  \BibitemOpen
  \bibfield  {author} {\bibinfo {author} {\bibfnamefont {C.}~\bibnamefont
  {Bamps}}\ and\ \bibinfo {author} {\bibfnamefont {S.}~\bibnamefont
  {Pironio}},\ }\href {\doibase 10.1103/PhysRevA.91.052111} {\bibfield
  {journal} {\bibinfo  {journal} {Phys. Rev. A}\ }\textbf {\bibinfo {volume}
  {91}},\ \bibinfo {pages} {052111} (\bibinfo {year} {2015})}\BibitemShut
  {NoStop}%
\bibitem [{\citenamefont {{\v{S}}upi{\'{c}}}\ \emph {et~al.}(2016)\citenamefont
  {{\v{S}}upi{\'{c}}}, \citenamefont {Augusiak}, \citenamefont {Salavrakos},\
  and\ \citenamefont {Ac{\'{\i}}n}}]{SAS+16}%
  \BibitemOpen
  \bibfield  {author} {\bibinfo {author} {\bibfnamefont {I.}~\bibnamefont
  {{\v{S}}upi{\'{c}}}}, \bibinfo {author} {\bibfnamefont {R.}~\bibnamefont
  {Augusiak}}, \bibinfo {author} {\bibfnamefont {A.}~\bibnamefont
  {Salavrakos}}, \ and\ \bibinfo {author} {\bibfnamefont {A.}~\bibnamefont
  {Ac{\'{\i}}n}},\ }\href {\doibase 10.1088/1367-2630/18/3/035013} {\bibfield
  {journal} {\bibinfo  {journal} {New J Phys.}\ }\textbf {\bibinfo {volume}
  {18}},\ \bibinfo {pages} {035013} (\bibinfo {year} {2016})}\BibitemShut
  {NoStop}%
\bibitem [{\citenamefont {Kaniewski}(2016)}]{Jed16}%
  \BibitemOpen
  \bibfield  {author} {\bibinfo {author} {\bibfnamefont {J.}~\bibnamefont
  {Kaniewski}},\ }\href {\doibase 10.1103/PhysRevLett.117.070402} {\bibfield
  {journal} {\bibinfo  {journal} {Phys. Rev. Lett.}\ }\textbf {\bibinfo
  {volume} {117}},\ \bibinfo {pages} {070402} (\bibinfo {year}
  {2016})}\BibitemShut {NoStop}%
\bibitem [{\citenamefont {Andersson}\ \emph {et~al.}(2017)\citenamefont
  {Andersson}, \citenamefont {Badzi\k{a}g}, \citenamefont {Bengtsson},
  \citenamefont {Dumitru},\ and\ \citenamefont {Cabello}}]{Andersson2017}%
  \BibitemOpen
  \bibfield  {author} {\bibinfo {author} {\bibfnamefont {O.}~\bibnamefont
  {Andersson}}, \bibinfo {author} {\bibfnamefont {P.}~\bibnamefont
  {Badzi\k{a}g}}, \bibinfo {author} {\bibfnamefont {I.}~\bibnamefont
  {Bengtsson}}, \bibinfo {author} {\bibfnamefont {I.}~\bibnamefont {Dumitru}},
  \ and\ \bibinfo {author} {\bibfnamefont {A.}~\bibnamefont {Cabello}},\ }\href
  {\doibase 10.1103/PhysRevA.96.032119} {\bibfield  {journal} {\bibinfo
  {journal} {Phys. Rev. A}\ }\textbf {\bibinfo {volume} {96}},\ \bibinfo
  {pages} {032119} (\bibinfo {year} {2017})}\BibitemShut {NoStop}%
\bibitem [{\citenamefont {Kaniewski}\ \emph {et~al.}()\citenamefont
  {Kaniewski}, \citenamefont {\v{S}upi\'{c}}, \citenamefont {Tura},
  \citenamefont {Bacarri}, \citenamefont {Salavrakos}, ,\ and\ \citenamefont
  {Ac\'{\i}n}}]{KST18}%
  \BibitemOpen
  \bibfield  {author} {\bibinfo {author} {\bibfnamefont {J.}~\bibnamefont
  {Kaniewski}}, \bibinfo {author} {\bibfnamefont {I.}~\bibnamefont
  {\v{S}upi\'{c}}}, \bibinfo {author} {\bibfnamefont {J.}~\bibnamefont {Tura}},
  \bibinfo {author} {\bibfnamefont {F.}~\bibnamefont {Bacarri}}, \bibinfo
  {author} {\bibfnamefont {A.}~\bibnamefont {Salavrakos}}, , \ and\ \bibinfo
  {author} {\bibfnamefont {A.}~\bibnamefont {Ac\'{\i}n}},\ }\href@noop {} {\
  }\Eprint {http://arxiv.org/abs/1807.03332} {arXiv:1807.03332} \BibitemShut
  {NoStop}%
\bibitem [{\citenamefont {\ifmmode \check{S}\else
  \v{S}\fi{}upi\ifmmode~\acute{c}\else \'{c}\fi{}}\ and\ \citenamefont
  {Bowles}()}]{SB19}%
  \BibitemOpen
  \bibfield  {author} {\bibinfo {author} {\bibfnamefont {I.}~\bibnamefont
  {\ifmmode \check{S}\else \v{S}\fi{}upi\ifmmode~\acute{c}\else \'{c}\fi{}}}\
  and\ \bibinfo {author} {\bibfnamefont {J.}~\bibnamefont {Bowles}},\
  }\href@noop {} {\ }\Eprint {http://arxiv.org/abs/1904.10042}
  {arXiv:1904.10042} \BibitemShut {NoStop}%
\bibitem [{\citenamefont {Pironio}(2005)}]{PIR05}%
  \BibitemOpen
  \bibfield  {author} {\bibinfo {author} {\bibfnamefont {S.}~\bibnamefont
  {Pironio}},\ }\href {https://doi.org/10.1063/1.1928727} {\bibfield  {journal}
  {\bibinfo  {journal} {J. Math. Phys.}\ }\textbf {\bibinfo {volume} {46}},\
  \bibinfo {pages} {062112} (\bibinfo {year} {2005})}\BibitemShut {NoStop}%
\bibitem [{\citenamefont {Rosset}\ \emph {et~al.}(2014)\citenamefont {Rosset},
  \citenamefont {Bancal},\ and\ \citenamefont {Gisin}}]{Rosset2014}%
  \BibitemOpen
  \bibfield  {author} {\bibinfo {author} {\bibfnamefont {D.}~\bibnamefont
  {Rosset}}, \bibinfo {author} {\bibfnamefont {J.-D.}\ \bibnamefont {Bancal}},
  \ and\ \bibinfo {author} {\bibfnamefont {N.}~\bibnamefont {Gisin}},\
  }\href@noop {} {\bibfield  {journal} {\bibinfo  {journal} {J. Phys. A: Math.
  Theor.}\ }\textbf {\bibinfo {volume} {47}},\ \bibinfo {pages} {424022}
  (\bibinfo {year} {2014})}\BibitemShut {NoStop}%
\bibitem [{\citenamefont {Curchod}\ \emph {et~al.}(2015)\citenamefont
  {Curchod}, \citenamefont {Gisin},\ and\ \citenamefont {Liang}}]{Curchod2015}%
  \BibitemOpen
  \bibfield  {author} {\bibinfo {author} {\bibfnamefont {F.~J.}\ \bibnamefont
  {Curchod}}, \bibinfo {author} {\bibfnamefont {N.}~\bibnamefont {Gisin}}, \
  and\ \bibinfo {author} {\bibfnamefont {Y.-C.}\ \bibnamefont {Liang}},\ }\href
  {\doibase 10.1103/PhysRevA.91.012121} {\bibfield  {journal} {\bibinfo
  {journal} {Phys. Rev. A}\ }\textbf {\bibinfo {volume} {91}},\ \bibinfo
  {pages} {012121} (\bibinfo {year} {2015})}\BibitemShut {NoStop}%
\bibitem [{\citenamefont {Barrett}\ \emph {et~al.}(2005)\citenamefont
  {Barrett}, \citenamefont {Linden}, \citenamefont {Massar}, \citenamefont
  {Pironio}, \citenamefont {Popescu},\ and\ \citenamefont {Roberts}}]{BLM+05}%
  \BibitemOpen
  \bibfield  {author} {\bibinfo {author} {\bibfnamefont {J.}~\bibnamefont
  {Barrett}}, \bibinfo {author} {\bibfnamefont {N.}~\bibnamefont {Linden}},
  \bibinfo {author} {\bibfnamefont {S.}~\bibnamefont {Massar}}, \bibinfo
  {author} {\bibfnamefont {S.}~\bibnamefont {Pironio}}, \bibinfo {author}
  {\bibfnamefont {S.}~\bibnamefont {Popescu}}, \ and\ \bibinfo {author}
  {\bibfnamefont {D.}~\bibnamefont {Roberts}},\ }\href@noop {} {\bibfield
  {journal} {\bibinfo  {journal} {Phys. Rev. A}\ }\textbf {\bibinfo {volume}
  {71}},\ \bibinfo {pages} {022101} (\bibinfo {year} {2005})}\BibitemShut
  {NoStop}%
\bibitem [{\citenamefont {Goh}\ \emph {et~al.}(2018)\citenamefont {Goh},
  \citenamefont {Kaniewski}, \citenamefont {Wolfe}, \citenamefont {Vertesi},
  \citenamefont {Wu}, \citenamefont {Cai}, \citenamefont {Liang},\ and\
  \citenamefont {Scarani}}]{GKW+18}%
  \BibitemOpen
  \bibfield  {author} {\bibinfo {author} {\bibfnamefont {K.~T.}\ \bibnamefont
  {Goh}}, \bibinfo {author} {\bibfnamefont {J.}~\bibnamefont {Kaniewski}},
  \bibinfo {author} {\bibfnamefont {E.}~\bibnamefont {Wolfe}}, \bibinfo
  {author} {\bibfnamefont {T.}~\bibnamefont {Vertesi}}, \bibinfo {author}
  {\bibfnamefont {X.}~\bibnamefont {Wu}}, \bibinfo {author} {\bibfnamefont
  {Y.}~\bibnamefont {Cai}}, \bibinfo {author} {\bibfnamefont {Y.-C.}\
  \bibnamefont {Liang}}, \ and\ \bibinfo {author} {\bibfnamefont
  {V.}~\bibnamefont {Scarani}},\ }\href {\doibase 10.1103/PhysRevA.97.022104}
  {\bibfield  {journal} {\bibinfo  {journal} {Phys. Rev. A}\ }\textbf {\bibinfo
  {volume} {97}},\ \bibinfo {pages} {022104} (\bibinfo {year}
  {2018})}\BibitemShut {NoStop}%
\bibitem [{\citenamefont {Popescu}\ and\ \citenamefont
  {Rohrlich}(1994)}]{PR94}%
  \BibitemOpen
  \bibfield  {author} {\bibinfo {author} {\bibfnamefont {S.}~\bibnamefont
  {Popescu}}\ and\ \bibinfo {author} {\bibfnamefont {D.}~\bibnamefont
  {Rohrlich}},\ }\href {\doibase 10.1007/BF02058098} {\bibfield  {journal}
  {\bibinfo  {journal} {Found. Phys.}\ }\textbf {\bibinfo {volume} {24}},\
  \bibinfo {pages} {379} (\bibinfo {year} {1994})}\BibitemShut {NoStop}%
\bibitem [{\citenamefont {Coladangelo}\ \emph {et~al.}(2017)\citenamefont
  {Coladangelo}, \citenamefont {Goh},\ and\ \citenamefont
  {Scarani}}]{Coladangelo:2017aa}%
  \BibitemOpen
  \bibfield  {author} {\bibinfo {author} {\bibfnamefont {A.}~\bibnamefont
  {Coladangelo}}, \bibinfo {author} {\bibfnamefont {K.~T.}\ \bibnamefont
  {Goh}}, \ and\ \bibinfo {author} {\bibfnamefont {V.}~\bibnamefont
  {Scarani}},\ }\href@noop {} {\bibfield  {journal} {\bibinfo  {journal} {Nat.
  Commun.}\ }\textbf {\bibinfo {volume} {8}},\ \bibinfo {pages} {15485 EP }
  (\bibinfo {year} {2017})}\BibitemShut {NoStop}%
\bibitem [{\citenamefont {Coopmans}\ \emph {et~al.}(2019)\citenamefont
  {Coopmans}, \citenamefont {Kaniewski},\ and\ \citenamefont
  {Schaffner}}]{CKS19}%
  \BibitemOpen
  \bibfield  {author} {\bibinfo {author} {\bibfnamefont {T.}~\bibnamefont
  {Coopmans}}, \bibinfo {author} {\bibfnamefont {J.}~\bibnamefont {Kaniewski}},
  \ and\ \bibinfo {author} {\bibfnamefont {C.}~\bibnamefont {Schaffner}},\
  }\href@noop {} {\bibfield  {journal} {\bibinfo  {journal} {Phys. Rev. A}\
  }\textbf {\bibinfo {volume} {99}},\ \bibinfo {pages} {052123} (\bibinfo
  {year} {2019})}\BibitemShut {NoStop}%
\bibitem [{\citenamefont {McKague}\ \emph {et~al.}(2012)\citenamefont
  {McKague}, \citenamefont {Yang},\ and\ \citenamefont {Scarani}}]{MYS12}%
  \BibitemOpen
  \bibfield  {author} {\bibinfo {author} {\bibfnamefont {M.}~\bibnamefont
  {McKague}}, \bibinfo {author} {\bibfnamefont {T.~H.}\ \bibnamefont {Yang}}, \
  and\ \bibinfo {author} {\bibfnamefont {V.}~\bibnamefont {Scarani}},\
  }\href@noop {} {\bibfield  {journal} {\bibinfo  {journal} {J. Phys. A: Math.
  Theo.}\ }\textbf {\bibinfo {volume} {45}},\ \bibinfo {pages} {455304}
  (\bibinfo {year} {2012})}\BibitemShut {NoStop}%
\bibitem [{\citenamefont {P\'al}\ \emph {et~al.}(2014)\citenamefont {P\'al},
  \citenamefont {V\'ertesi},\ and\ \citenamefont {Navascu\'es}}]{PVN14}%
  \BibitemOpen
  \bibfield  {author} {\bibinfo {author} {\bibfnamefont {K.~F.}\ \bibnamefont
  {P\'al}}, \bibinfo {author} {\bibfnamefont {T.}~\bibnamefont {V\'ertesi}}, \
  and\ \bibinfo {author} {\bibfnamefont {M.}~\bibnamefont {Navascu\'es}},\
  }\href {\doibase 10.1103/PhysRevA.90.042340} {\bibfield  {journal} {\bibinfo
  {journal} {Phys. Rev. A}\ }\textbf {\bibinfo {volume} {90}},\ \bibinfo
  {pages} {042340} (\bibinfo {year} {2014})}\BibitemShut {NoStop}%
\bibitem [{\citenamefont {Natarajan}\ and\ \citenamefont
  {Vidick}(2017)}]{Natarajan:2017}%
  \BibitemOpen
  \bibfield  {author} {\bibinfo {author} {\bibfnamefont {A.}~\bibnamefont
  {Natarajan}}\ and\ \bibinfo {author} {\bibfnamefont {T.}~\bibnamefont
  {Vidick}},\ }in\ \href {\doibase 10.1145/3055399.3055468} {\emph {\bibinfo
  {booktitle} {Proceedings of the 49th Annual ACM SIGACT Symposium on Theory of
  Computing}}},\ \bibinfo {series and number} {STOC 2017}\ (\bibinfo
  {publisher} {ACM},\ \bibinfo {address} {New York, NY, USA},\ \bibinfo {year}
  {2017})\ pp.\ \bibinfo {pages} {1003--1015}\BibitemShut {NoStop}%
\bibitem [{\citenamefont {Coladangelo}(2018)}]{Coladangelo2018}%
  \BibitemOpen
  \bibfield  {author} {\bibinfo {author} {\bibfnamefont {A.}~\bibnamefont
  {Coladangelo}},\ }\href {\doibase 10.1103/PhysRevA.98.052115} {\bibfield
  {journal} {\bibinfo  {journal} {Phys. Rev. A}\ }\textbf {\bibinfo {volume}
  {98}},\ \bibinfo {pages} {052115} (\bibinfo {year} {2018})}\BibitemShut
  {NoStop}%
\bibitem [{\citenamefont {Shadbolt}\ \emph {et~al.}(2012)\citenamefont
  {Shadbolt}, \citenamefont {V{\'e}rtesi}, \citenamefont {Liang}, \citenamefont
  {Branciard}, \citenamefont {Brunner},\ and\ \citenamefont
  {O'Brien}}]{Shadbolt2012}%
  \BibitemOpen
  \bibfield  {author} {\bibinfo {author} {\bibfnamefont {P.}~\bibnamefont
  {Shadbolt}}, \bibinfo {author} {\bibfnamefont {T.}~\bibnamefont
  {V{\'e}rtesi}}, \bibinfo {author} {\bibfnamefont {Y.-C.}\ \bibnamefont
  {Liang}}, \bibinfo {author} {\bibfnamefont {C.}~\bibnamefont {Branciard}},
  \bibinfo {author} {\bibfnamefont {N.}~\bibnamefont {Brunner}}, \ and\
  \bibinfo {author} {\bibfnamefont {J.~L.}\ \bibnamefont {O'Brien}},\ }\href
  {\doibase 10.1038/srep00470} {\bibfield  {journal} {\bibinfo  {journal} {Sci.
  Rep.}\ }\textbf {\bibinfo {volume} {2}},\ \bibinfo {pages} {470} (\bibinfo
  {year} {2012})}\BibitemShut {NoStop}%
\bibitem [{\citenamefont {Branciard}(2011)}]{Branciard2011}%
  \BibitemOpen
  \bibfield  {author} {\bibinfo {author} {\bibfnamefont {C.}~\bibnamefont
  {Branciard}},\ }\href {\doibase 10.1103/PhysRevA.83.032123} {\bibfield
  {journal} {\bibinfo  {journal} {Phys. Rev. A}\ }\textbf {\bibinfo {volume}
  {83}},\ \bibinfo {pages} {032123} (\bibinfo {year} {2011})}\BibitemShut
  {NoStop}%
\bibitem [{\citenamefont {de~Vicente}(2014)}]{Jul14}%
  \BibitemOpen
  \bibfield  {author} {\bibinfo {author} {\bibfnamefont {J.~I.}\ \bibnamefont
  {de~Vicente}},\ }\href {\doibase 10.1088/1751-8113/47/42/424017} {\bibfield
  {journal} {\bibinfo  {journal} {J. Phys. A: Math. Theo.}\ }\textbf {\bibinfo
  {volume} {47}},\ \bibinfo {pages} {424017} (\bibinfo {year}
  {2014})}\BibitemShut {NoStop}%
\bibitem [{\citenamefont {Yang}\ \emph {et~al.}(2014)\citenamefont {Yang},
  \citenamefont {V\'ertesi}, \citenamefont {Bancal}, \citenamefont {Scarani},\
  and\ \citenamefont {Navascu\'es}}]{YVB+14}%
  \BibitemOpen
  \bibfield  {author} {\bibinfo {author} {\bibfnamefont {T.~H.}\ \bibnamefont
  {Yang}}, \bibinfo {author} {\bibfnamefont {T.}~\bibnamefont {V\'ertesi}},
  \bibinfo {author} {\bibfnamefont {J.-D.}\ \bibnamefont {Bancal}}, \bibinfo
  {author} {\bibfnamefont {V.}~\bibnamefont {Scarani}}, \ and\ \bibinfo
  {author} {\bibfnamefont {M.}~\bibnamefont {Navascu\'es}},\ }\href {\doibase
  10.1103/PhysRevLett.113.040401} {\bibfield  {journal} {\bibinfo  {journal}
  {Phys. Rev. Lett.}\ }\textbf {\bibinfo {volume} {113}},\ \bibinfo {pages}
  {040401} (\bibinfo {year} {2014})}\BibitemShut {NoStop}%
\bibitem [{\citenamefont {Bancal}\ \emph
  {et~al.}(2012{\natexlab{b}})\citenamefont {Bancal}, \citenamefont {Pironio},
  \citenamefont {Ac{\'\i}n}, \citenamefont {Liang}, \citenamefont {Scarani},\
  and\ \citenamefont {Gisin}}]{Bancal:NatPhys:2012}%
  \BibitemOpen
  \bibfield  {author} {\bibinfo {author} {\bibfnamefont {J.-D.}\ \bibnamefont
  {Bancal}}, \bibinfo {author} {\bibfnamefont {S.}~\bibnamefont {Pironio}},
  \bibinfo {author} {\bibfnamefont {A.}~\bibnamefont {Ac{\'\i}n}}, \bibinfo
  {author} {\bibfnamefont {Y.-C.}\ \bibnamefont {Liang}}, \bibinfo {author}
  {\bibfnamefont {V.}~\bibnamefont {Scarani}}, \ and\ \bibinfo {author}
  {\bibfnamefont {N.}~\bibnamefont {Gisin}},\ }\href
  {https://doi.org/10.1038/nphys2460} {\bibfield  {journal} {\bibinfo
  {journal} {Nat. Phys.}\ }\textbf {\bibinfo {volume} {8}},\ \bibinfo {pages}
  {867 EP } (\bibinfo {year} {2012}{\natexlab{b}})}\BibitemShut {NoStop}%
\bibitem [{\citenamefont {Barnea}\ \emph {et~al.}(2013)\citenamefont {Barnea},
  \citenamefont {Bancal}, \citenamefont {Liang},\ and\ \citenamefont
  {Gisin}}]{Barnea2013}%
  \BibitemOpen
  \bibfield  {author} {\bibinfo {author} {\bibfnamefont {T.~J.}\ \bibnamefont
  {Barnea}}, \bibinfo {author} {\bibfnamefont {J.-D.}\ \bibnamefont {Bancal}},
  \bibinfo {author} {\bibfnamefont {Y.-C.}\ \bibnamefont {Liang}}, \ and\
  \bibinfo {author} {\bibfnamefont {N.}~\bibnamefont {Gisin}},\ }\href
  {\doibase 10.1103/PhysRevA.88.022123} {\bibfield  {journal} {\bibinfo
  {journal} {Phys. Rev. A}\ }\textbf {\bibinfo {volume} {88}},\ \bibinfo
  {pages} {022123} (\bibinfo {year} {2013})}\BibitemShut {NoStop}%
\bibitem [{\citenamefont {S\o{}rensen}\ and\ \citenamefont
  {M\o{}lmer}(2001)}]{Sorensen:PRL:2001}%
  \BibitemOpen
  \bibfield  {author} {\bibinfo {author} {\bibfnamefont {A.~S.}\ \bibnamefont
  {S\o{}rensen}}\ and\ \bibinfo {author} {\bibfnamefont {K.}~\bibnamefont
  {M\o{}lmer}},\ }\href {\doibase 10.1103/PhysRevLett.86.4431} {\bibfield
  {journal} {\bibinfo  {journal} {Phys. Rev. Lett.}\ }\textbf {\bibinfo
  {volume} {86}},\ \bibinfo {pages} {4431} (\bibinfo {year}
  {2001})}\BibitemShut {NoStop}%
\bibitem [{\citenamefont {Rosset}\ \emph {et~al.}()\citenamefont {Rosset},
  \citenamefont {Baumeler}, \citenamefont {Bancal}, \citenamefont {Gisin},
  \citenamefont {Martin}, \citenamefont {Renou},\ and\ \citenamefont
  {Wolfe}}]{Rosset2019private}%
  \BibitemOpen
  \bibfield  {author} {\bibinfo {author} {\bibfnamefont {D.}~\bibnamefont
  {Rosset}}, \bibinfo {author} {\bibfnamefont {{\"A}.}~\bibnamefont
  {Baumeler}}, \bibinfo {author} {\bibfnamefont {J.-D.}\ \bibnamefont
  {Bancal}}, \bibinfo {author} {\bibfnamefont {N.}~\bibnamefont {Gisin}},
  \bibinfo {author} {\bibfnamefont {A.}~\bibnamefont {Martin}}, \bibinfo
  {author} {\bibfnamefont {M.-O.}\ \bibnamefont {Renou}}, \ and\ \bibinfo
  {author} {\bibfnamefont {E.}~\bibnamefont {Wolfe}},\ }\href@noop {} {\enquote
  {\bibinfo {title} {Algebraic and geometric properties of local
  transformations},}\ }\bibinfo {howpublished} {(in preparation)}\BibitemShut
  {NoStop}%
\bibitem [{\citenamefont {Kaniewski}()}]{JedPrivate}%
  \BibitemOpen
  \bibfield  {author} {\bibinfo {author} {\bibfnamefont {J.}~\bibnamefont
  {Kaniewski}},\ }\href@noop {} {}\bibinfo {howpublished} {(in
  preparation)}\BibitemShut {NoStop}%
\bibitem [{\citenamefont {Navascu\'es}\ \emph {et~al.}(2007)\citenamefont
  {Navascu\'es}, \citenamefont {Pironio},\ and\ \citenamefont
  {Ac\'{\i}n}}]{NPA07}%
  \BibitemOpen
  \bibfield  {author} {\bibinfo {author} {\bibfnamefont {M.}~\bibnamefont
  {Navascu\'es}}, \bibinfo {author} {\bibfnamefont {S.}~\bibnamefont
  {Pironio}}, \ and\ \bibinfo {author} {\bibfnamefont {A.}~\bibnamefont
  {Ac\'{\i}n}},\ }\href {\doibase 10.1103/PhysRevLett.98.010401} {\bibfield
  {journal} {\bibinfo  {journal} {Phys. Rev. Lett.}\ }\textbf {\bibinfo
  {volume} {98}},\ \bibinfo {pages} {010401} (\bibinfo {year}
  {2007})}\BibitemShut {NoStop}%
\bibitem [{\citenamefont {Moroder}\ \emph {et~al.}(2013)\citenamefont
  {Moroder}, \citenamefont {Bancal}, \citenamefont {Liang}, \citenamefont
  {Hofmann},\ and\ \citenamefont {G\"uhne}}]{MBL+13}%
  \BibitemOpen
  \bibfield  {author} {\bibinfo {author} {\bibfnamefont {T.}~\bibnamefont
  {Moroder}}, \bibinfo {author} {\bibfnamefont {J.-D.}\ \bibnamefont {Bancal}},
  \bibinfo {author} {\bibfnamefont {Y.-C.}\ \bibnamefont {Liang}}, \bibinfo
  {author} {\bibfnamefont {M.}~\bibnamefont {Hofmann}}, \ and\ \bibinfo
  {author} {\bibfnamefont {O.}~\bibnamefont {G\"uhne}},\ }\href {\doibase
  10.1103/PhysRevLett.111.030501} {\bibfield  {journal} {\bibinfo  {journal}
  {Phys. Rev. Lett.}\ }\textbf {\bibinfo {volume} {111}},\ \bibinfo {pages}
  {030501} (\bibinfo {year} {2013})}\BibitemShut {NoStop}%
\end{thebibliography}%

\end{document}